\newtheorem{lemma}{Lemma}%
\newtheorem{corollary}{Corollary}%
\newtheorem{proposition}{Proposition}%
\begin{document}

\baselineskip=4.4mm

\makeatletter

\newcommand{\E}{\mathrm{e}\kern0.2pt} 
\newcommand{\D}{\mathrm{d}\kern0.2pt}
\newcommand{\RR}{\mathbb{R}}
\newcommand{\CC}{\mathbb{C}}%
\newcommand{\ii}{\kern0.05em\mathrm{i}\kern0.05em}

\renewcommand{\Re}{\operatorname{Re}} 
\renewcommand{\Im}{\operatorname{Im}}

\def\bottomfraction{0.9}

\title{\bf Sloshing of a two-layer fluid in a vertical cylinder of constant depth}

\author{Nikolay Kuznetsov and Oleg Motygin}

\date{}

\maketitle

\vspace{-8mm}

\begin{center}
Laboratory for Mathematical Modelling of Wave Phenomena, \\ Institute for Problems
in Mechanical Engineering, Russian Academy of Sciences, \\ V.O., Bol'shoy pr. 61, St
Petersburg 199178, Russian Federation \\ E-mail: nikolay.g.kuznetsov@gmail.com,
mov@ipme.ru
\end{center}

\begin{abstract}
\noindent Sloshing eigenvalues and eigenfunctions are studied for vertical cylinders
of constant, finite depth occupied by a two-layer fluid. Two families of
eigenfrequencies are obtained in the form expressing them explicitly via the
eigenvalues of the Neumann Laplacian in the two-dimensional domain\,---\,cylinder's
cross-section. Eigenfrequencies belonging to one of the families behave similar to
those that describe sloshing in a homogeneous fluid, whereas the other family
includes a large number of sufficiently small frequencies provided the ratio of
densities is close to unity. Various properties of eigenfrequencies are investigated
for cylinders of arbitrary cross-section; they include the dependence on the
interface depth and the ratio of densities, the asymptotics of the eigenvalue
counting function. The behaviour of eigenvalues and the corresponding eigenmodes is
illustrated by numerical examples for circular cylinders without and with a radial
baffle.
\end{abstract}

\setcounter{equation}{0}


\section{Introduction}
\label{sect:1}

Sloshing of various fluids (in particular, consisting of multiple layers of
different density) in containers with rigid walls is a topic of great interest to
engineers, physicists and mathematicians. It has received extensive study; see, for
example, the monographs \cite{FT} and \cite{I} (the second one has more than 100
pages of references). A historical review going back to the 18th century can be
found in \cite{FK}, whereas the comprehensive book \cite{KK} presents an advanced
mathematical approach to the problem based on spectral theory of operators in a
Hilbert space.

During the past decade, sloshing of multiple, immiscible fluids has received a
substantial attention; see, e.g., \cite{M,Mi} and numerous references cited
in the latter article. The importance of these studies for engineering is clearly
outlined in \cite[Sect.~I]{Mi}, where various results obtained so far (mostly
numerical, but also experimental) are described as well. However, there are still
many open questions in this area of research and our aim is to fill in this gap at
least partially.

In the present theoretical study (but with numerous numerical examples), we consider
sloshing of two-layer fluids in containers belonging to a particular class, namely,
cylinders of finite, constant depth that have vertical walls, but arbitrary
cross-section; see Fig.~\ref{fig:1}. Despite this class of containers is rather
specific, nevertheless, it allows us to discover some new effects peculiar to the
process in general. In particular, it is shown that sloshing eigenfrequencies are
distributed more densely in the case of a two-layer fluid than for a homogeneous
one. The reason for this is that there exists two sequences of eigenfrequencies for
such a fluid; those in one of them behave similar to sloshing frequencies in the
case of a homogeneous fluid, whereas the other family includes a large number of
sufficiently small frequencies provided the ratio of densities is close to unity.
The eigenfunctions corresponding to these frequencies have rather specific
behaviour, namely, their velocity potentials have substantial jumps at the
interface, where they also change sign. 

The plan of the paper is as follows. In Subsection \ref{sect:1.1}, the general
sloshing problem is stated for a two-layer fluid (description of geometry and the
governing equations are presented), whereas the corresponding variational principle
is considered in Subsection \ref{sect:1.2}. The case of a vertical cylinder with
horizontal bottom is investigated in detail in Section \ref{sect:2} which consists
of three subsections. Subsection \ref{sect:2.1} deals with reduction based on
separation of variables of the general problem in the case of a vertical cylinder
having constant, finite depth; in particular, a formula for eigenvalues is obtained
along with a description of eigenfunctions. Various properties of eigenvalues are
investigated in Subsection \ref{sect:2.2}; they are illustrated numerically for a
circular cylinder. Properties of sloshing eigenfunctions in a circular cylinder are
considered in Subsection \ref{sect:2.3}. The effect of a radial baffle on sloshing
of a two-layer fluid in a circular cylinder is discussed in Section \ref{sect:3}.
The most important conclusions are presented in Section \ref{sect:4}.

\subsection{General statement of the problem}
\label{sect:1.1}

\begin{figure}[t]
\centering
 \SetLabels
 \L (0.25*0.81) $F$\\
 \L (0.25*0.46) $I$\\
 \L (0.475*0.6) $W_1$\\
 \L (0.475*0.25) $W_2$\\
 \L (0.6*0.81) $x_1$\\
 \L (0.34*0.78) $x_2$\\
 \L (0.515*0.98) $y$\\
 \L (0.97*0.62) $h$\\
 \L (-0.008*0.44) $d$\\
 \endSetLabels
 \leavevmode\AffixLabels{\includegraphics[width=55mm]{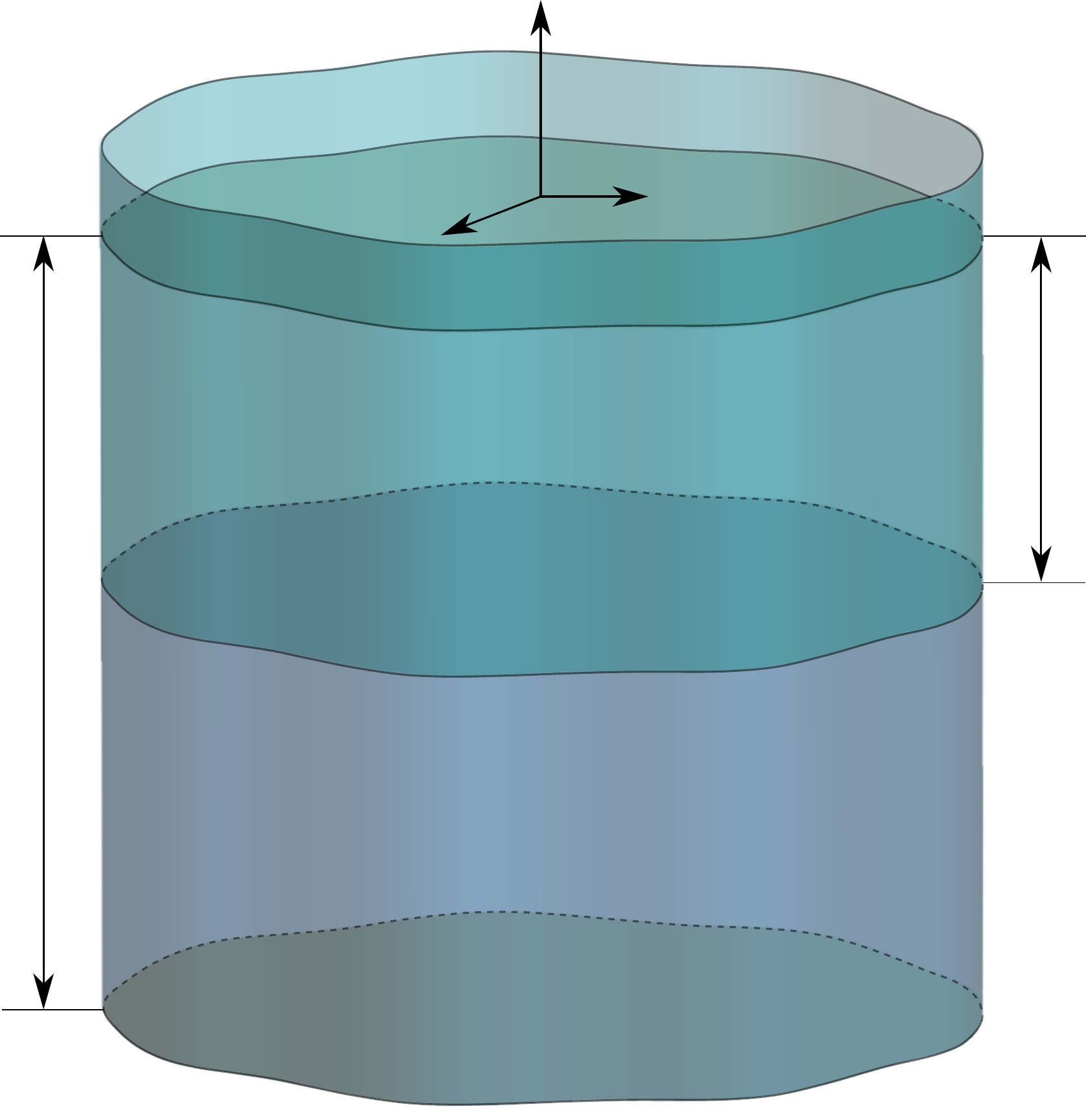}}
 \caption{A sketch of cylinder's geometry and notation; in particular, $F$ and $I$
 denote the free surface of the upper (lighter) fluid and the interface, respectively;
 the latter separates immiscible fluids of different densities.}
 \label{fig:1}
\end{figure}

Let two immiscible, inviscid, incompressible, heavy fluids occupy an open container,
whose walls and bottom are rigid surfaces. Choosing rectangular Cartesian
coordinates $(x_1, x_2, y)$ so that their origin belongs to the mean free surface of
the upper fluid and the $y$-axis is directed upwards, the whole fluid domain $W$
lies in the lower half-space
\[ \{ -\infty < x_1, x_2 < +\infty, \, y < 0 \} .
\]
The boundary $\partial W$ is assumed to be piece-wise smooth and such that every two
adjacent smooth pieces of $\partial W$ are not tangent along their common edge. We
also suppose that each horizontal cross-section of $W$ is a bounded two-dimensional
domain; that is, a connected, open set in the corresponding plane. (The latter
assumption is made for the sake of simplicity because it excludes the possibility of
two or more interfaces between fluids at different levels.) The free surface $F$
bounding above the upper fluid of density $\rho_1 > 0$ is the non-empty interior of
$\partial W \cap \{ y=0 \}$. Let $h > 0$ be less than $\max \{ |y|: \, (x_1, x_2, y)
\in \partial W \}$, then the interface $I = W \cap \{y = -h\}$ separates the upper
fluid from the lower one of density $\rho_2 > \rho_1$. We denote by $W_1$ and $W_2$
the domains $W \cap \{y > -h\}$ and $W \cap \{y < -h\}$ occupied by the upper and
lower fluids, respectively. The surface tension is neglected and we suppose the
fluid motion to be irrotational and of small amplitude. This allows us to linearize
the boundary condition on $F$ and the coupling condition on $I$. With a
time-harmonic factor, say $\cos \omega t$, removed, the velocity potentials $u^{(1)}
(x_1, x_2, y)$ and $u^{(2)} (x_1, x_2, y)$ (they may be taken to be real functions)
describing the flow in $W_1$ and $W_2$, respectively, must satisfy the following
coupled boundary value problem:
\begin{align}
&  u^{(j)}_{x_1 x_1} + u^{(j)}_{x_2 x_2} + u^{(j)}_{yy} = 0 \quad {\rm in} \ W_j,
\quad j=1,2, \label{lap} \\ 
& u^{(1)}_y = \nu u^{(1)} \quad {\rm on} \ F,
\label{nuf} \\ 
& \rho \bigl( u^{(2)}_y - \nu u^{(2)} \bigr) = u^{(1)}_y - \nu
u^{(1)} \quad {\rm on} \ I, \label{nui} \\
& u^{(2)}_y = u^{(1)}_y \quad {\rm on} \
I, \label{yi} \\ 
& \partial u^{(j)}/\partial n = 0 \quad {\rm on} \ B_j\quad j=1,2.
\label{nc}
\end{align}
Here $\rho=\rho_2/\rho_1 >1$ is the non-dimensional parameter characterizing the
stratification, whereas the spectral parameter $\nu$ is equal to $\omega^2/g$, where
$\omega$ is the radian frequency of the water oscillations and $g$ is the
acceleration due to gravity; $B_j=\partial W_j\setminus (\bar F \cup \bar I)$ is the
rigid boundary of $W_j$. Combining \eqref{nui} and \eqref{yi}, one obtains another
form of the spectral coupling condition \eqref{nui}:
\begin{equation}
(\rho -1) u^{(2)}_y = \nu \bigl( \rho u^{(2)} - u^{(1)} \bigr)
\quad {\rm on}\ I. \label{nui2}
\end{equation}
It is convenient to complement relations \eqref{lap}--\eqref{nc} by the orthogonality
conditions
\begin{equation}
\int_F u^{(1)} \,\D x = 0 \quad {\rm and} \quad \int_I \bigl( \rho u^{(2)} - u^{(1)}
\bigr)\,\D x = 0, \quad \D x = \D x_1 \D x_2 , \label{ort}
\end{equation}
thus excluding the zero eigenvalue of \eqref{lap}--\eqref{nc}.

If $\rho =1$, then conditions \eqref{nui} and \eqref{yi} mean that $u^{(1)}$ and
$u^{(2)}$ are harmonic continuations of each other across the interface $I$. In this
case, problem \eqref{lap}--\eqref{nc} complemented by the first orthogonality
condition \eqref{ort} (the second one is trivial) moves to the problem that
describes sloshing of a homogeneous fluid in $W$. It is well-known since the 1950s
that the latter problem has a discrete spectrum; that is, there exists a sequence
$\{ \nu_n^W \}_1^\infty$ of positive eigenvalues of finite multiplicity (the
superscript $W$ is used here and below to distinguish the sloshing eigenvalues when
a homogeneous fluid occupies the whole domain $W$, from those corresponding to a
two-layer fluid which will be denoted simply by $\nu_n$). In this sequence, the
eigenvalues are written in increasing order and repeated according to their
multiplicity; moreover, $\nu_n^W \to \infty$ as $n \to \infty$. The sequence of the
corresponding eigenfunctions $\{ u_n \}_1^\infty$ belongs to the Sobolev space $H^1
(W)$ and forms a complete system in an appropriate Hilbert space. These results can
be found in many sources, for example, in the book \cite{KK}. Moreover, the problem
describing sloshing of a multi-layer fluid is also considered in \cite[\S~3.6.4]{KK}
as an application of the general approach developed in this book. It is established
that its spectrum is discrete; in particular, the spectrum of
\eqref{lap}--\eqref{nc} complemented by \eqref{ort} is discrete.

\subsection{Variational principle}
\label{sect:1.2}

It is well known that the problem describing sloshing of a homogeneous fluid in a
bounded domain $W$ can be presented as a variational problem and the corresponding
Rayleigh quotient is as follows:
\begin{equation}
R_W (u) = \frac{\int_W |\nabla u|^2 \,\D x \D y}{\int_F u^2 \,\D x}. \label{Rayhom}
\end{equation}
For obtaining the fundamental eigenvalue $\nu_1^W\!$, one has to minimize $R_W (u)$
over the subspace of the Sobolev space $H^1 (W)$ consisting of functions that
satisfy the first orthogonality condition \eqref{ort}. A standard procedure yields
the whole sequence of eigenvalues.

In the case of a two-layer fluid, the Rayleigh quotient for the sloshing problem has
the following form:
\begin{equation}
R (u^{(1)},u^{(2)}) = \frac{\int_{W_1}\! \left| \nabla u^{(1)} \right|^2 \, \D x \D y
+ \rho \int_{W_2}\! \left| \nabla u^{(2)} \right|^2 \, \D x \D y }{\int_F \left[
u^{(1)} \right]^2 \, \D x + (\rho - 1)^{-1} \int_I \left[ \rho u^{(2)} - u^{(1)}
\right]^2 \, \D x} . \label{Raytwo}
\end{equation}
To determine the fundamental sloshing eigenvalue $\nu_1$, one has to minimize $R
(u^{(1)},u^{(2)})$ over the subspace of $H^1 (W_1) \oplus H^1 (W_2)$, where both
orthogonality conditions \eqref{ort} hold. Again, the usual procedure allows us to
find $\nu_n$ for all $n > 1$.

Now we are in a position to prove the following assertion in which the assumption
that~$W$ is bounded is essential.

\begin{proposition}
\label{prop1}
Let\/ $\nu_1^W$ and\/ $\nu_1$ be the fundamental eigenvalues of the sloshing problem in
the bounded domain\/ $W$ for homogeneous and two-layer fluids respectively. Then the
inequality\/ $\nu_1 < \nu_1^W$ holds.
\end{proposition}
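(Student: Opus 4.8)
The plan is to play the two variational principles stated above against each other. Because $\nu_1$ is the minimum of the two-layer Rayleigh quotient \eqref{Raytwo} over the subspace of $H^1(W_1)\oplus H^1(W_2)$ singled out by both orthogonality conditions \eqref{ort}, it suffices to exhibit a single admissible pair $(u^{(1)},u^{(2)})$ for which $R(u^{(1)},u^{(2)}) < \nu_1^W$; this immediately gives $\nu_1 \le R(u^{(1)},u^{(2)}) < \nu_1^W$. The raw material will be the fundamental eigenfunction $u_*$ of the homogeneous problem, which attains the minimum $\nu_1^W = R_W(u_*)$ in \eqref{Rayhom}, is harmonic in $W$, and satisfies $\int_F u_*\,\D x = 0$.

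The construction I would use is to set $u^{(1)} = u_*|_{W_1}$ and to take $u^{(2)} \equiv c$ constant on $W_2$. Choosing $u^{(2)}$ constant does two jobs at once. First, it annihilates the lower-layer gradient term, so that the numerator of \eqref{Raytwo} reduces to $\int_{W_1}|\nabla u_*|^2\,\D x\D y$. Second, the single free parameter $c$ is exactly what is needed to restore admissibility: the first condition in \eqref{ort} holds automatically, while the second becomes $\int_I (\rho c - u_*)\,\D x = 0$ and is satisfied by the unique choice $c = (\rho\,|I|)^{-1}\int_I u_*\,\D x$. No matching of $u^{(1)}$ and $u^{(2)}$ along $I$ is needed, since the trial space is a direct sum.

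It then remains only to bound the quotient. The denominator of \eqref{Raytwo} equals $\int_F u_*^2\,\D x + (\rho-1)^{-1}\int_I (\rho c - u_*)^2\,\D x$, which is at least $\int_F u_*^2\,\D x$ because $\rho > 1$ makes the interface integral nonnegative. With the numerator equal to $\int_{W_1}|\nabla u_*|^2\,\D x\D y$, I would then conclude
\[
R(u^{(1)},u^{(2)}) \le \frac{\int_{W_1}|\nabla u_*|^2\,\D x\D y}{\int_F u_*^2\,\D x} < \frac{\int_{W}|\nabla u_*|^2\,\D x\D y}{\int_F u_*^2\,\D x} = \nu_1^W ,
\]
where the strict inequality follows from $\int_{W_2}|\nabla u_*|^2\,\D x\D y > 0$.

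I expect the one genuinely delicate point to be this last strictness, namely that $u_*$ is not locally constant on $W_2$. Here harmonicity enters: if $\nabla u_* \equiv 0$ on $W_2$, then $u_*$ is constant on that open set, and by unique continuation (real-analyticity of harmonic functions together with connectedness of $W$) it would be constant throughout $W$, so that $\int_F u_*\,\D x = 0$ forces $u_* \equiv 0$, a contradiction. This verification, rather than the inequality itself, is the main obstacle, and it is a mild one; the remainder of the argument uses nothing about the lower layer beyond $|I| > 0$ and the trace and embedding theorems assumed for both subdomains.
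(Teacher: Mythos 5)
Your proof is correct, but it takes a genuinely different route from the paper's. The paper substitutes the pair $(u^{(1)},u^{(2)})=(\rho u_1,\,u_1)$, i.e.\ the restrictions of $\rho u_1$ and $u_1$ to $W_1$ and $W_2$; this choice makes $\rho u^{(2)}-u^{(1)}$ vanish identically on $I$, so the second orthogonality condition in \eqref{ort} and the interface term in the denominator of \eqref{Raytwo} disappear automatically, and one obtains the exact value $R(\rho u_1,u_1)=\bigl(\int_{W_1}|\nabla u_1|^2\,\D x\D y+\rho^{-1}\int_{W_2}|\nabla u_1|^2\,\D x\D y\bigr)/\int_F u_1^2\,\D x$, which lies below $\nu_1^W$ because the lower-layer energy is damped by the factor $\rho^{-1}<1$. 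You instead keep $u_1$ in the upper layer, replace it by a constant in the lower layer, spend the one free constant on restoring the second condition in \eqref{ort}, and bound the denominator from below; this discards the lower-layer energy entirely and in fact yields the sharper, $\rho$-independent bound $\nu_1\le\int_{W_1}|\nabla u_1|^2\,\D x\D y\big/\int_F u_1^2\,\D x$. Both arguments hinge on the same point for strictness, namely $\int_{W_2}|\nabla u_1|^2\,\D x\D y>0$; the paper leaves this implicit, whereas your unique-continuation argument (a harmonic function constant on an open subset of the connected domain $W$ is constant throughout $W$, hence zero by the first condition in \eqref{ort}) supplies it explicitly, which is a genuine improvement in rigour. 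The only point worth flagging is that your trial pair, unlike the paper's, is discontinuous across $I$; as you correctly note, this is harmless because the admissible class is the direct sum $H^1(W_1)\oplus H^1(W_2)$ with no matching condition imposed on $I$.
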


\begin{proof}
If $u_1$ is an eigenfunction corresponding to $\nu_1^W$, then \eqref{Rayhom} takes
the form:
\[ \nu_1^W = \frac{\int_W |\nabla u_1|^2\,\D x \D y}{\int_F u_1^2\,\D x} \, . \]
Let $u^{(1)}$ and $u^{(2)}$ be the restrictions of $\rho u_1$ and $u_1$ to $W_1$ and
$W_2$, respectively. Then the pair $\left( u^{(1)},\,u^{(2)} \right)$ is an
admissible element for the Rayleigh quotient \eqref{Raytwo}. Substituting it into
\eqref{Raytwo}, we obtain that
\[ R (\rho u_1, u_1) = \frac{\int_{W_1} \left| \nabla u_1 \right|^2\,\D x \D y + 
\rho^{-1} \int_{W_2} \left| \nabla u_1 \right|^2\,\D x \D y }{\int_F u_1^2\,\D x}.
\]
Comparing this equality with the previous one and taking into account that $\rho
>1$, one finds that $R (\rho u_1, u_1) < \nu_1^W$. Since $\nu_1$ is the minimum of
\eqref{Raytwo}, we conclude that $\nu_1 < \nu_1^W$.
\end{proof}

\section{Vertical cylinder with horizontal bottom}
\label{sect:2}

\subsection{Reduction of problem (\ref{lap})--(\ref{nc})}
\label{sect:2.1}

\begin{samepage}
Let $W = \{ x = (x_1,x_2) \in D, \, y \in (-d,0) \}$, where $D$ is a bounded
two-dimensional domain (the horizontal cross-section of the vertical cylinder $W$)
with piecewise smooth boundary, and the positive $d < \infty$ is the constant depth
of $W$. Thus, the cylinder's bottom $\{ x \in D, \, y = -d \}$ is horizontal,
whereas the free surface and the interface are 
\[ F = \{ x \in D, \, y=0 \} \ \ \mbox{and} \ \ I =
\{ x \in D, \, y=-h \} ,
\]
respectively, where $0<h<d$.

\end{samepage}

For a homogeneous fluid occupying such a container, the sloshing problem is
equivalent to the free membrane problem. Indeed, putting
\[ u (x,y) = v (x) \cosh k (y+d) , 
\]
one reduces problem \eqref{lap}--\eqref{nc} with $\rho = 1$, complemented by the
first orthogonality condition \eqref{ort} to the following spectral problem:
\begin{equation}
\nabla_x^2 v + k^2 v = 0 \ \ \mbox{in} \ \ D , \quad \partial v / \partial n_x = 0 \
\ \mbox{on} \ \ \partial D , \ \ \int_D v \, \D x = 0. \label{fm}
\end{equation}
Here $\nabla_x = (\partial/\partial x_1 , \partial/\partial x_2)$ and $n_x$ is a
unit normal to $\partial D$ in ${\RR}^2$. It is clear that $\nu^W$ is an eigenvalue
of the sloshing problem if and only if $k^2$ is an eigenvalue of \eqref{fm} and
\begin{equation}
\nu^W = k \tanh kd . \label{nuW}
\end{equation} 
It is well-known that problem \eqref{fm} has a sequence of positive eigenvalues $\{
k_n^2 \}_1^\infty$ written in increasing order and repeated according to their
finite multiplicity, and such that $k_n^2\to \infty$ as $n\to \infty$. The
corresponding eigenfunctions form a complete system in $H^1 (D)$.

Letting $d \to \infty$ in formula \eqref{nuW}, one obtains $\nu^W = k$ for the
infinitely deep cylinder $W = D \times (-\infty, 0)$. Indeed,
\begin{equation}
u (x,y) = v (x) \, \E^{k y} \label{nuW'}
\end{equation} 
is the sloshing eigenfunction in this cylinder if and only if $k^2$ is an eigenvalue
of problem~\eqref{fm} and $v$ is the corresponding eigenfunction. Furthermore, the
same is true when this cylinder is occupied by a two-layer fluid; this immediately
follows by verifying conditions \eqref{lap}--\eqref{nc} and \eqref{ort} for the
function defined by formula \eqref{nuW'}.

Let us turn to considering a reduction procedure similar to the described above in
the case when $W$ has a finite depth and is occupied by a two-layer fluid. We put
\begin{align}
& u^{(1)} (x,y) = v (x) \, A\, \bigl[ k\cosh (k y) + \nu \sinh (k y) \bigr] ,
\label{u1} \\ & u^{(2)} (x,y) = v (x) \, B \cosh k(y+d) , \label{u2}
\end{align} 
where $A$ and $B$ are constants that do not vanish simultaneously, and so these
functions satisfy conditions \eqref{nuf} and \eqref{nc}, respectively. In this way,
we reduce problem \eqref{lap}--\eqref{nc} and \eqref{ort}, to \eqref{fm}, but
instead of the explicit formula \eqref{nuW} it is complemented by the quadratic
equation
\begin{equation}
Q (\nu, k) = 0 \, , \label{qe}
\end{equation} 
where
\begin{multline}
Q (\nu, k) = \nu^2 \bigl[\sinh(kh)\sinh(k(d - h))+\rho \cosh(k h)\cosh(k(d -
h))\bigr] - \nu \rho k \sinh(k d) \\ {} + k^2(\rho-1) \sinh(kh)\sinh(k(d - h)) \, ,
\quad k > 0 ,
 \label{Qdef}
\end{multline}
Indeed, the roots of \eqref{qe} define sloshing eigenvalues provided $k^2$ is an
eigenvalue of \eqref{fm}, because $Q (\nu, k)$ is proportional to the determinant
of the following linear algebraic system for $A$ and~$B$:
\begin{gather}
A \bigl(k^2 - \nu^2\bigr) \sinh(kh) + B \rho \bigl[k \sinh(k(d - h))- \nu \cosh(k(d
- h))\bigr] =0, \notag\\
A \bigl[k \sinh(kh) - \nu \cosh(kh)\bigr] + B \sinh(k(d - h)) = 0 . \notag
\end{gather}
This homogeneous system arises when one substitutes expressions \eqref{u1} and
\eqref{u2} into the coupling conditions \eqref{nui} and \eqref{yi}, and so it
defines eigensolutions of the sloshing problem if there exists its non-trivial
solution. Therefore, the determinant must vanish, thus implying that $\nu$ is a
solution of equation \eqref{qe}.

{\it Hence\/ $\nu$ is an eigenvalue of the two-layer sloshing problem in the cylinder\/
$W$ if and only if\/ $\nu$ satisfies\/ \eqref{qe}, where\/ $k^2$ is an eigenvalue of\/
\eqref{fm} in the cylinder's cross-section\/ $D$.}

\def\Th{t_1}
\def\Tdmh{t_2}

Furthermore, equation \eqref{qe} is equivalent to the following one
\begin{equation}
\nu^2 (\rho + \Th\Tdmh) -
 \nu \rho k (\Th+\Tdmh) +
 k^2(\rho-1) \Th\Tdmh = 0 \, , \label{qe2}
\end{equation}
whose coefficients are symmetric functions of
\[ \Th = \tanh (k h) , \quad \Tdmh = \tanh(k(d - h)) \, .
\]
Indeed, the left-hand side of \eqref{qe2} is equal to $Q (\nu, k)/\bigl[ \cosh (k
h)\cosh (k(d - h))\bigr]$, because
\[ \sinh (k d) = \cosh (k h) \sinh (k(d - h)) + \cosh (k(d - h)) \sinh (k h) . \]
It is clear that \eqref{qe2} is a significant simplification of \eqref{qe}.

Let us show that both roots of \eqref{qe2}, say
\begin{equation}
\nu^{(\pm)} = \frac{k \bigl(b \pm \sqrt{\mathcal{D}}\bigr)}{2 (\rho + \Th \Tdmh)}
\label{nupm}
\end{equation}
are positive (for the sake of brevity, their dependence on $k$, $d$, $h$ and $\rho$
is omitted). Here $b = \rho (\Th + \Tdmh) > 0$, and since $\rho > 1$,
\begin{equation}
\mathcal{D} = b^2 - 4 (\rho - 1) \Th \Tdmh (\rho + \Th \Tdmh ) < b^2 ,
\label{Ddef2}
\end{equation}
On the other hand, we have
\[ \mathcal{D} = \rho^2 (\Th - \Tdmh)^2  + 4 [ \rho \Th \Tdmh (1 - \Th \Tdmh) +
\Th^2 \Tdmh^2 ] > 0 .
\]
These inequalities prove that \eqref{nupm} is positive. Now we are in a position to
formulate the following.

\begin{proposition}\label{prop2} 
Let\/ $W$ be a vertical cylinder of the constant depth\/ $d$ and the uniform
cross-section\/ $D$. If\/ $W$ is occupied by a two-layer fluid with the density ratio\/
$\rho > 1$ and the interface at the depth\/ $h$, then formula\/ \eqref{nupm} defines
two sequences of sloshing eigenvalues\/ $\bigl\{ \nu_n^{(+)} \bigr\}_1^\infty$ and\/
$\bigl\{ \nu_n^{(-)} \bigr\}_1^\infty$, where\/ $k = k_n > 0$ and $k_n^2$ is an
eigenvalue of problem\/ \eqref{fm}. Also, the inequality\/ $\nu_n^{(-)} < \nu_n^{(+)}$
holds for every\/ $n = 1,2,\dots$.

Moreover, $u^{(1)}_n$ and\/ $u^{(2)}_n$---\,the sloshing modes in\/ $W_1$ and\/ $W_2$,
respectively, that correspond to\/ $\nu_n^{(\pm)}$---\,are defined by formulae\/
\eqref{u1} and\/ \eqref{u2}, respectively, in which\/ $A_n$ is an arbitrary non-zero
real constant and
\[ B_n = A_n \frac{\nu_n^{(\pm)} \cosh(k_n h) - k_n \sinh(k_n h)}{\sinh(k_n (d - h))}
\, ,
\]
whereas\/ $v = v_n$ is any eigenfunction of problem\/ \eqref{fm} corresponding
to\/ $k^2_n$.
\end{proposition}

\subsection{Properties of eigenvalues}
\label{sect:2.2}

The fact that the coefficients of equation \eqref{qe2} are symmetric functions of
$\Th$ and $\Tdmh$ has the following important consequence.

\begin{proposition}\label{prop3} 
Let a vertical cylinder\/ $W$ of the constant depth\/ $d$ and the uniform cross-section\/
$D$ be occupied by a two-layer fluid with the density ratio\/ $\rho > 1$, then the
sequences\/ $\bigl\{ \nu_n^{(+)} \bigr\}_1^\infty$ and\/ $\bigl\{ \nu_n^{(-)}
\bigr\}_1^\infty$ serve also as the set of sloshing eigenvalues in the case when the
depths of the subdomains\/ $W_1$ and\/ $W_2$, respectively, are equal to\/ $d-h$ and\/
$h$, respectively.
\end{proposition}

In other words, if $\rho > 1$ and $d > 0$ are fixed, whereas $h \in (0, d)$, then
the graph of $\nu_n^{(\pm)} (h)$ is symmetric about the line $h = d/2$. Also,
formula \eqref{nupm} implies that
\[ \lim_{h \to +0} \nu^{(-)}_n (h) =  0 \, , \quad \lim_{h \to +0} \nu^{(+)}_n (h) =
k_n \tanh (k_n d)
\]
and
\[ \nu^{(\pm)}_n (d / 2) = k_n \frac{\tanh (k_n d / 2)}{\rho + \tanh^2 (k_n d / 2)}
\left[ \rho \pm \sqrt{\rho \bigl[ 1 - \tanh^2 (k_n d / 2) \bigr] + \tanh^2 (k_n d /
2)} \right] ,
\]
where $k_n^2$ is the $n$th eigenvalue of problem \eqref{fm}.

Let us analyse the behaviour of $\nu^{(-)}$ and $\nu^{(+)}$, which are clearly
continuous functions of~$\rho$. A straightforward calculation yields that
\[ \frac{\partial \nu^{(-)}}{\partial \rho} = 
\frac{k \, \Th \Tdmh \left[ \sqrt{\mathcal{D}} (\Th+\Tdmh) + 2 \, \Th\Tdmh
(1+\Th\Tdmh) + \rho \left( 2 - \Th^2 - \Tdmh^2 \right) \right]}{2 \sqrt{\mathcal{D}}
(\rho + \Th\Tdmh)^2} > 0 \, ,
\]
because $\Th, \Tdmh < 1$. Hence, $\nu^{(-)}$ monotonically increases as $\rho$ goes
from unity to infinity. Since formula \eqref{Ddef2} implies that
\begin{equation}
\sqrt \mathcal{D} \sim b \ \ \mbox{as} \ \rho \to 1 + 0 \ \ \mbox{and} \ \ \sqrt
\mathcal{D} \sim \rho \, |\Th - \Tdmh| \ \ \mbox{as} \ \rho \to \infty ,
\label{sim}
\end{equation}
\eqref{nupm} with the lower sign yields that
\[ \lim_{\rho \to 1 + 0} \nu^{(-)} = 0 \quad \mbox{and} \quad \lim_{\rho \to \infty}
\nu^{(-)} = \frac{k}{2} (\Th+\Tdmh - |\Th-\Tdmh|) \, ,
\]
respectively. It is clear that $\Th + \Tdmh - |\Th - \Tdmh| = 2 \min \{\Th,
\Tdmh\}$, and so
\[ \lim_{\rho \to \infty} \nu^{(-)} = k \, \min \{\Th, \Tdmh\} < k \, \tanh k d =
\nu^W \, .
\]


In the same way, we obtain
\[ \frac{\partial \nu^{(+)}}{\partial \rho} = 
\frac{k \, \Th \Tdmh \left[ \sqrt{\mathcal{D}} (\Th+\Tdmh) - 2 \, \Th\Tdmh
(1+\Th\Tdmh) - \rho \left( 2 - \Th^2 - \Tdmh^2 \right) \right]}{2 \sqrt{\mathcal{D}}
(\rho + \Th\Tdmh)^2} \, .
\]
Let us show that the expression in the square brackets is negative. Indeed, we have
\[ \mathcal{D} (\Th+\Tdmh)^2 - \left[ 2 \, \Th\Tdmh (1+\Th\Tdmh) +
\rho \left( 2 - \Th^2 - \Tdmh^2 \right) \right]^2 = - 4 \bigl(1 - \Th^2\bigr)\bigl(1
- \Tdmh^2\bigr) (\rho + \Th\Tdmh )^2 < 0 \, ,
\]
which implies that
\[ \frac{\partial \nu^{(+)}}{\partial \rho} < 0 \, .
\]
Hence, $\nu^{(+)}$ monotonically decreases as $\rho$ goes from unity to infinity.
Using \eqref{sim} in \eqref{nupm} with the upper sign, we see that
\[ \lim_{\rho \to 1 + 0} \nu^{(+)} = k \frac{\Th + \Tdmh}{1 + \Th \Tdmh} = k \tanh(kd)
\ \ \mbox{and} \ \ \lim_{\rho \to \infty} \nu^{(+)} = \frac{k}{2} (\Th + \Tdmh +
|\Th - \Tdmh|) \, .
\]
Since $\Th + \Tdmh + |\Th - \Tdmh| = 2 \max \{\Th, \Tdmh\}$, we have that
\[ \lim_{\rho \to \infty} \nu^{(+)} = k \, \max \{\Th, \Tdmh\} < k \, \tanh k d =
\nu^W \, . 
\]

Let us summarize the obtained above as follows.

\begin{proposition}\label{prop4}
For every\/ $n = 1,2,\dots$, the sloshing eigenvalue\/ $\nu_n^{(+)}$ $\bigl[ \nu_n^{(-)}
\bigr]$ regarded as a function of\/ $\rho \in (1, \infty)$ monotonically decreases
{\rm[}increases, respectively\/{\rm]}. Its range is the nonempty interval
\begin{align*}
& \bigl( k_n \max \{\tanh (k_n h) , \, \tanh (k_n (d - h))\}, \ k_n \tanh (k_n d)
\bigr) \\ & \bigl[ \bigl( 0, \ k_n \min \{\tanh (k_n h) , \, \tanh (k_n (d - h))\}
\bigr) , \ \textit{respectively} \bigr] .
\end{align*}
Both intervals shrink away as\/ $h$ tends to zero.
\end{proposition}

Combining this proposition and formula \eqref{nuW}, we arrive at the following.

\begin{corollary}\label{corol1} 
For every\/ $n = 1,2,\dots$, the inequalities\/ \mbox{$\nu_n^{(-)} < \nu_n^{(+)} <
\nu_n^W$} are valid for sloshing eigenvalues in a two-layer fluid.
\end{corollary}

Letting $n \to \infty$ in formula \eqref{nupm}, it is straightforward to obtain the
following.

\begin{lemma}\label{lemma1}
For every $\rho > 1$ the sequences of sloshing eigenvalues have the following
asymptotic behaviour:
\[ \nu_n^{(+)} \sim k_n , \quad \nu_n^{(-)} \sim \frac{\rho - 1}{\rho+1} \, k_n 
\quad as \ n \to \infty .
\]
Here\/ $k^2_n$ is an eigenvalue of problem\/ \eqref{fm}, whereas the remainder terms are
exponentially small.
\end{lemma}

Notice that the asymptotic behaviour of $\nu_n^{(+)}$ as $n \to \infty$ coincides
with that of $\nu_n^W$; this immediately follows from formula \eqref{nuW}. Since
$\nu_n^{(-)}$ has a different behaviour as $n \to \infty$, the distribution function
$\mathcal{N} (\nu)$ (it is equal to the number of sloshing eigenvalues of both kinds
that are less than $\nu > 0$) better characterizes the asymptotics of the whole
spectrum in the case when a two-layer fluid with the density ratio $\rho > 1$
occupies the cylinder $W$. The so-called Weyl's law, describing the distribution of
eigenvalues of problem~\eqref{fm}, is used for evaluating $\mathcal{N} (\nu)$.

Let $\mathcal{N}_D (\lambda)$ denote the distribution function for problem
\eqref{fm}; that is, the number of eigenvalues of this problem that are less than
$\lambda$. Then Weyl's law (see \cite[p.~442]{CH}) says that
\[ \mathcal{N}_D (\lambda) \sim \lambda |D| / (4 \pi) \ \ \mbox{as} \ \lambda \to 
\infty \, ,
\]
where $|D|$ stands for the area of the domain $D$. Combining Weyl's law and
Lemma~\ref{lemma1}, we are in a position to obtain $\mathcal{N} (\nu)$ as $\nu \to
\infty$, but prior to that it is worth mentioning that {\it when a homogeneous fluid
occupies\/ $W$, the behaviour of the distribution function for\/ $\nu_n^W$ is as
follows}:
\[ \mathcal{N}_W (\nu) \sim \nu^2 |D| / (4 \pi) \ \ as \ \nu \to \infty .
\]
This is a consequence of Weyl's law and formula \eqref{nuW}, which expresses
sloshing eigenvalues via those of problem \eqref{fm}.

By Lemma~\ref{lemma1}, the number of eigenvalues $\nu_n^{(+)}$ that are less than $\nu$ is
asymptotically equal~to
\[ \nu^2 |D| / (4 \pi) \ \ \mbox{as} \ \nu \to \infty \, ,
\]
which coincides with the behaviour of $\mathcal{N}_W (\nu)$. On the other hand,
Lemma 1 yields that the analogous number for $\nu_n^{(-)}$ is asymptotically equal
to
\[ \frac{\nu^2 |D|}{4 \pi} \left[ \frac{\rho + 1}{\rho - 1} \right]^2 \ \ \mbox{as}
\ \nu \to \infty \, .
\]
Summing up these asymptotics, we arrive at the following.

\begin{proposition}\label{prop5}
Let\/ $W$ be a vertical cylinder of the constant depth having the uniform cross-section\/
$D$. If\/ $W$ is occupied by a two-layer fluid with the density ratio\/ $\rho > 1$,
then the distribution function of sloshing eigenvalues has the following behaviour\/
\begin{equation}
\mathcal{N} (\nu) \sim \frac{\nu^2 |D| (\rho^2 + 1)}{2 \pi (\rho - 1)^2} \ \
\mbox{as} \ \nu \to \infty \, ,
\label{eq:Nasympt}
\end{equation}
independent of the values\/ $d$ and\/ $h$.
\end{proposition}

It is clear that $\mathcal{N} (\nu) \to \nu^2 |D| / (2 \pi)$ as $\rho \to \infty$.
Hence, $\mathcal{N} (\nu)$ is, roughly speaking, two times greater than
$\mathcal{N}_W (\nu)$ when $\nu$ and $\rho$ are sufficiently large; that is,
sloshing eigenvalues are distributed more densely for a two-layer fluid with large
$\rho$ than for a homogeneous one. Since $\mathcal{N} (\nu) \to \infty$ as $\rho \to
1 + 0$, it is reasonable to expect that sloshing eigenvalues are distributed even
more densely for a two-layer fluid when $\rho - 1$ is close to zero. It is worth
mentioning that \eqref{eq:Nasympt} is a particular case of Theorem 2.4.1, proven in
\cite{Putin2022} for container geometries that are not cylindrical.

\begin{figure}[t]
\centering
 \SetLabels
 \L (0.956*-0.02) $\rho$\\
 \L (0.035*0.706) $\nu^{(-)}_{143,...,12}$\\
 \L (-0.018*0.53) {\footnotesize144}\\
 \L (0.41*0.055) {\footnotesize1}\\
 \L (0.39*0.155) {\footnotesize2}\\
 \L (0.375*0.24) {\footnotesize3}\\
 \L (0.365*0.285) {\footnotesize4}\\
 \L (0.34*0.43) {\footnotesize5}\\
 \L (0.333*0.48) {\footnotesize6}\\
 \L (0.315*0.59) {\footnotesize7}\\
 \L (0.305*0.635) {\footnotesize8}\\
 \L (0.3*0.686) {\footnotesize9}\\
 \L (0.275*0.75) {\footnotesize10}\\
 \L (0.25*0.81) {\footnotesize11}\\
 \L (0.115*0.365) {\footnotesize\bfseries1}\\
 \L (0.1*0.65) {\footnotesize\bfseries2}\\
 \L (0.095*0.82) {\footnotesize\bfseries3}\\
 \L (0.09*0.9) {\footnotesize\bfseries4}\\
 \endSetLabels
 \leavevmode\AffixLabels{\includegraphics[width=125mm]{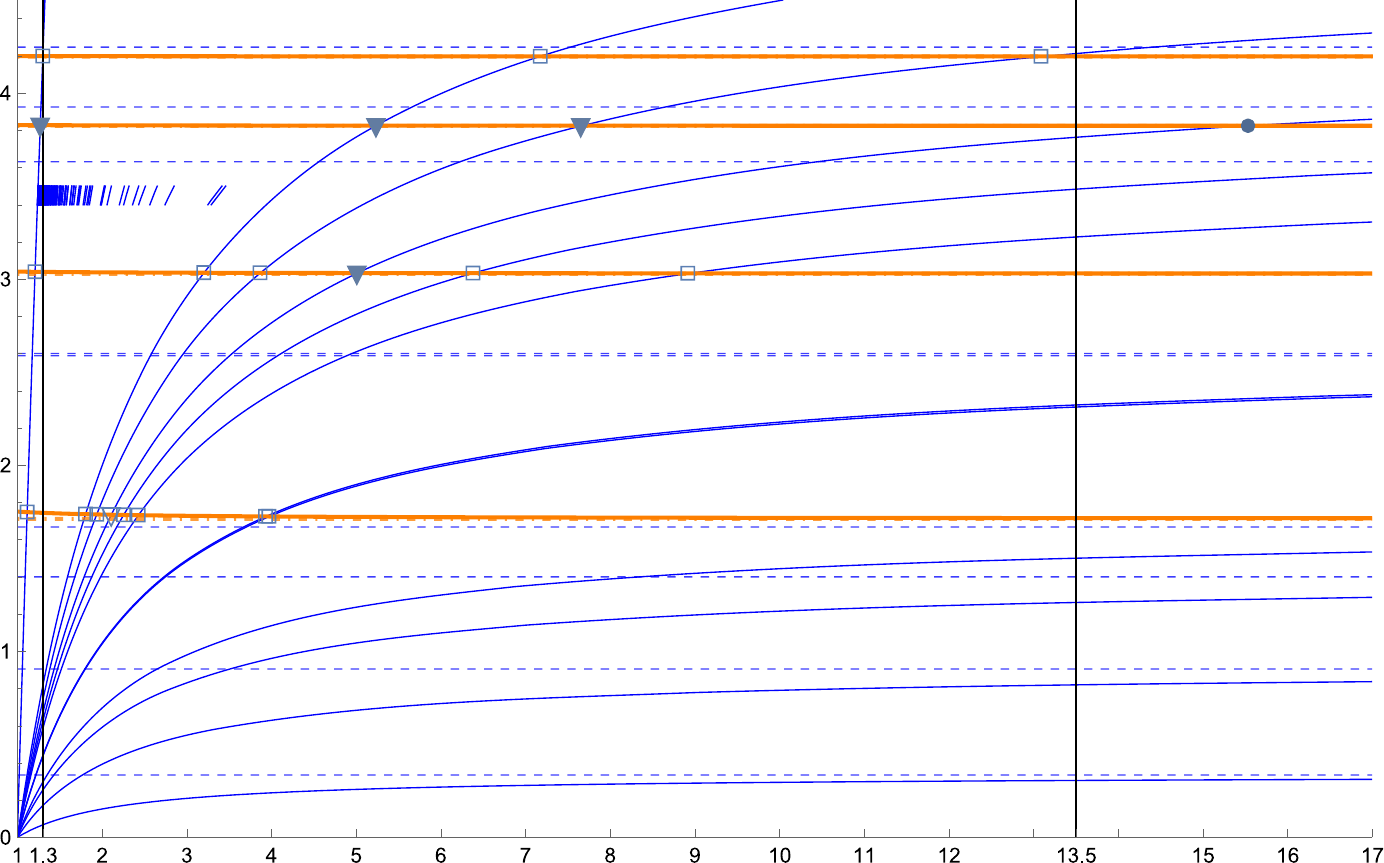}}
\vspace{2mm}
 \caption{For the circular cylinder with $D = \{ |x| < 1 \}$, $d = 1$ and $h =
1/10$, the graphs $\nu^{(+)}_{1,...,4} (\rho)$ (numbered in bold) and
$\nu^{(-)}_{1,...,11,144} (\rho)$ are plotted for $\rho \in [0, 17]$, whereas the
positions of $\nu^{-}_{12,...,143} (\rho)$ are just located between $\nu^{(+)}_2$
and $\nu^{{+}}_3$. The corresponding asymptotes are shown by dashed lines. Points of
intersection of graphs are marked according to the multiplicity of the corresponding
eigenvalue which is two (bullet), three (triangle) or four (square).}
 \label{fig:2}
\end{figure}

For the circular cylinder, whose cross-section is the unit disk, $d = 1$ and $h =
1/10$ (or $h = 9/10$, what is the same according to Proposition~\ref{prop3}), this
is illustrated in Fig.~\ref{fig:2}. In this figure, the subscripts of the sequence
$\{ \bar{k}_m \}_1^\infty$, that include all positive zeros of $J_\mu'$ (here
$J_\mu$ is a Bessel function whose order $\mu$ is any nonnegative integer) arranged
in ascending order, are used for numbering. Thus, the first eleven elements of this
sequence are as follows:
\begin{align}
& \bar{k}_1 = j'_{1,1} \approx 1.841183... , &\!\!\!& \bar{k}_2 = j'_{2,1} \approx
3.054236... , &\!\!\!& \bar{k}_3 = j'_{0,1} \approx 3.831705... ,\nonumber \\
&\bar{k}_4 = j'_{3,1} \approx 4.201188... , &\!\!\!& \bar{k}_5 = j'_{4,1} \approx
5.317553... , &\!\!\!& \bar{k}_6 = j'_{1,2} \approx 5.331442... , \nonumber \\
&\bar{k}_7 = j'_{5,1} \approx 6.415616... , &\!\!\!& \bar{k}_8 = j'_{2,2} \approx
6.706133..., &\!\!\!& \bar{k}_9 = j'_{0,2} \approx 7.015586... , \nonumber \\
&\bar{k}_{10} = j'_{6,1} \approx 7.501266... , &\!\!\!& \bar{k}_{11} = j'_{3,2}
\approx 8.015236... , &\!\!\!& \label{zeroes}
\end{align}
whereas $\bar{k}_{143} = j'_{0,10} \approx 32.189679... $, $\bar{k}_{144} =
j'_{14,5} \approx 32.236969...$ Here $j'_{\mu,s}$ stands for the $s$th positive
zero of $J'_\mu$ (for $\mu = 0$ this numbering differs from that used in \cite{AS},
where $j'_{0,s}$ is the $s$th nonnegative zero). Notice that $\{ \bar{k}_m
\}_1^\infty$, unlike $\{ k_n \}_1^\infty$, does not take into account multiplicity
of eigenvalues of problem \eqref{fm}, and this distinguishes these sequences.


In Fig.~\ref{fig:2}, the graphs $\nu_m^{(+)} (\rho)$ and $\nu_m^{(-)} (\rho)$ (see formula
\eqref{nupm} with $k = \bar{k}_m$) are plotted for $m = 1,2,3,4$ and $m =
1,2,\dots,11,144$, respectively; the vertical lines at $\rho = 1.3$ and $\rho =
13.5$ correspond to the gasoline/water and water/mercury superpositions,
respectively.

We see that each $\nu_m^{(+)} (\rho)$ rapidly asymptotes to the value $\bar{k}_m
\tanh (\frac{9}{10} \bar{k}_m)$ as $\rho \to \infty$, and the first ten values
$\nu_m^{(-)}$ are interlaced with the first four $\nu_m^{(+)}$ for water superposed
over mercury. On the other hand, the number of values $\nu_m^{(-)}$ interlacing with
the first four $\nu_m^{(+)}$ is $144$ for gasoline/water and the sequence of the
interlacing values is as follows:
\[ \{ \nu^{(-)}_{1}, \dots, \nu^{(-)}_{31}, \nu^{(+)}_{1}, \nu^{(-)}_{32}, \dots,
\nu^{(-)}_{78}, \nu^{(+)}_{2}, \nu^{(-)}_{79}, \dots, \nu^{(-)}_{119}, \nu^{(+)}_{3},
\nu^{(-)}_{120}, \dots, \nu^{(-)}_{144}, \nu^{(+)}_{4}, \nu^{(-)}_{145}, \dots \} .
\]
Here
\begin{gather*}
\nu^{(-)}_{1} \approx 0.067182...,\quad \nu^{(-)}_{31} \approx 1.718432...,\quad
\nu^{(+)}_{1} \approx 1.744132...,\quad \nu^{(-)}_{32} \approx 1.786995...,\\
\nu^{(-)}_{78} \approx 3.002318...,\quad \nu^{(+)}_{2} \approx 3.039090...,\quad
\nu^{(-)}_{79} \approx 3.069011...,\\ \nu^{(-)}_{119} \approx 3.798636...,\quad
\nu^{(+)}_{3} \approx 3.827601...,\quad \nu^{(-)}_{120} \approx 3.829050...,\\
\nu^{(-)}_{144} \approx 4.197291...,\quad \nu^{(+)}_{4} \approx 4.199020...,\quad
\nu^{(-)}_{145} \approx 4.207027...
\end{gather*}

\begin{figure}[t]
\centering
 \SetLabels
 \L (0.956*-0.02) $\rho$\\
 \L (0.033*0.715) $\nu^{(-)}_{142,...,8}$\\
 \L (-0.018*0.53) {\footnotesize143}\\
 \L (0.52*0.205) {\footnotesize1}\\
 \L (0.54*0.465) {\footnotesize2}\\
 \L (0.57*0.64) {\footnotesize3}\\
 \L (0.6*0.72) {\footnotesize4}\\
 \L (0.66*0.96) {\footnotesize5}\\
 \L (0.64*0.995) {\footnotesize6}\\
 \L (0.29*0.96) {\footnotesize7}\\
 \L (0.125*0.355) {\footnotesize\bfseries1}\\
 \L (0.09*0.655) {\footnotesize\bfseries2}\\
 \L (0.085*0.832) {\footnotesize\bfseries3}\\
 \L (0.08*0.91) {\footnotesize\bfseries4}\\
 \endSetLabels
 \leavevmode\AffixLabels{\includegraphics[width=125mm]{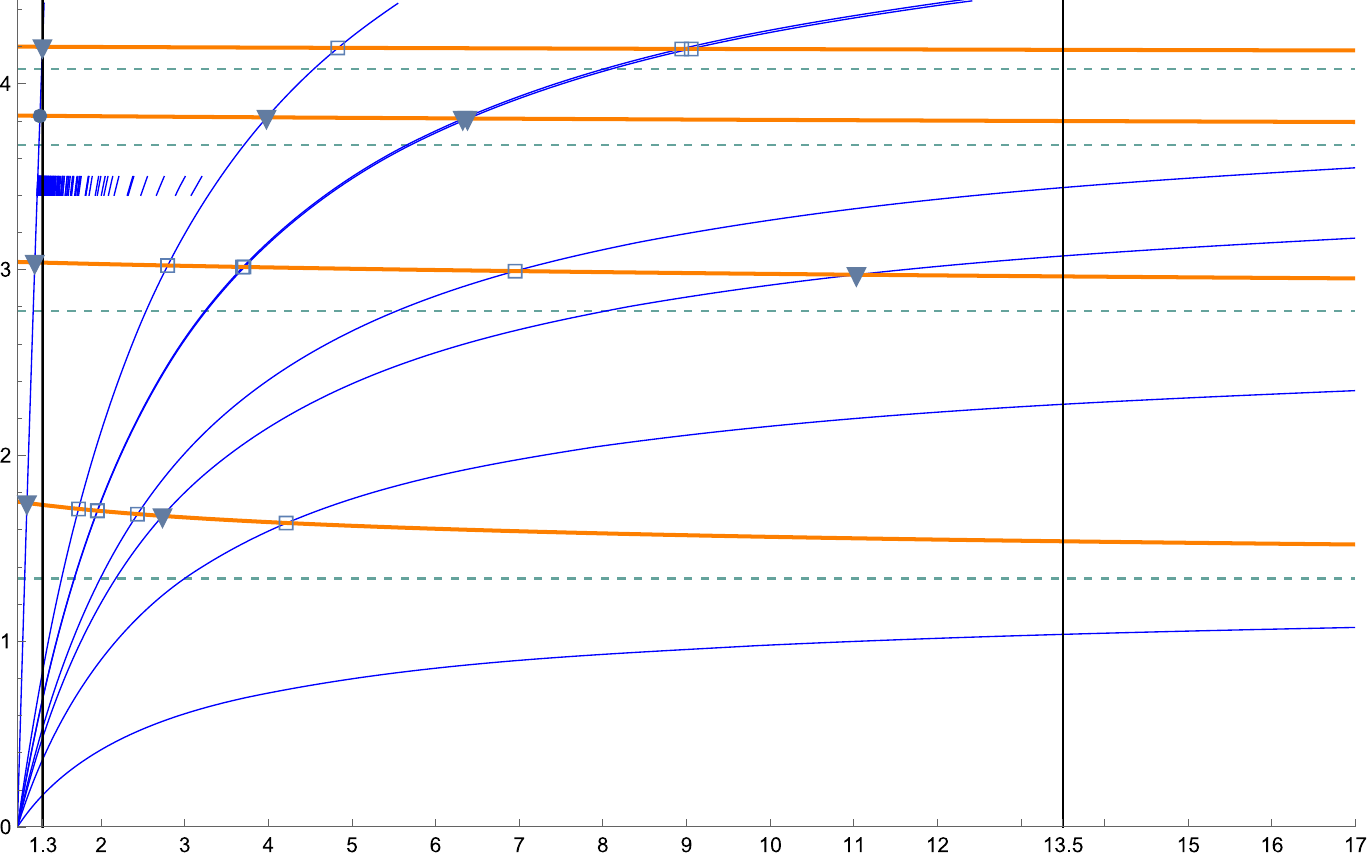}}
 \vspace{2mm}
 \caption{For the circular cylinder with $D = \{ |x| < 1 \}$, $d = 1$ and $h =
1/2$, the graphs $\nu^{(+)}_{1,...,4} (\rho)$ (numbered in bold) and
$\nu^{(-)}_{1,...,7,143} (\rho)$ are plotted for $\rho \in [0, 17]$, whereas the
positions of $\nu^{(-)}_{8,...,142} (\rho)$ are just located between $\nu^{(+)}_2$
and $\nu^{(+)}_3$. The corresponding asymptotes are shown by dashed lines. Points of
intersection of graphs are marked according to the multiplicity of the corresponding
eigenvalue which is two (bullet), three (triangle) or four (square).}
 \label{fig:3}
\end{figure}

Furthermore, when the interface is located at $h = 1/2$, computations show that the
number of values $\nu_m^{(-)}$ that are interlaced with the first four $\nu_m^{(+)}$
is slightly less for both marked values of $\rho$; see Fig.~\ref{fig:3}. In
particular, only four $\nu_m^{(-)}$ are interlaced with the first four
$\nu_m^{(+)}$ for $\rho = 13.5$. On the other hand, the number of values
$\nu_m^{(-)}$ interlacing with the first four $\nu_m^{(+)}$ is $143$ for
gasoline/water and the sequence of the interlacing values is as follows:
\[ \{ \nu^{(-)}_{1}, \dots, \nu^{(-)}_{27}, \nu^{(+)}_{1}, \nu^{(-)}_{28}, \dots,
\nu^{(-)}_{78}, \nu^{(+)}_{2}, \nu^{(-)}_{79}, \dots, \nu^{(-)}_{119}, \nu^{(+)}_{3},
\nu^{(-)}_{120}, \dots, \nu^{(-)}_{143}, \nu^{(+)}_{4}, \nu^{(-)}_{144}, \dots \} .
\]
It almost coincides with that given above. Here
\begin{gather*}
\nu^{(-)}_{1} \approx 0.169347...,\quad
\nu^{(-)}_{27} \approx 1.717867...,\quad
\nu^{(+)}_{1} \approx 1.733024...,\quad
\nu^{(-)}_{28} \approx 1.737866...,\\
\nu^{(-)}_{78} \approx 3.034963...,\quad
\nu^{(+)}_{2} \approx 3.037086...,\quad
\nu^{(-)}_{79} \approx 3.099223...,\\
\nu^{(-)}_{119} \approx 3.811116...,\quad
\nu^{(+)}_{3} \approx 3.827088...,\quad
\nu^{(-)}_{120} \approx 3.841065...,\\
\nu^{(-)}_{143} \approx 4.198653...,\quad
\nu^{(+)}_{4} \approx 4.198760...,\quad
\nu^{(-)}_{144} \approx 4.204822... 
\end{gather*}

An interesting feature of the case $h = 1/2$ is that both $\nu_m^{(+)} (\rho)$ and
$\nu_m^{(-)} (\rho)$ asymptote to the same value $\bar{k}_m \tanh (\frac{1}{2}
\bar{k}_m)$ from above and from below, respectively, as $\rho \to \infty$, but they
approach to this value rather slowly which distinguish this case from that, where $h
= 1/10$.

Let us turn to another characterisation of eigenvalues, namely, their multiplicity.
Recall that if a vertical cylinder of finite depth is occupied by a homogeneous
fluid, then the multiplicity of every sloshing eigenvalue coincides with that of the
corresponding eigenvalue of problem \eqref{fm}.  In the case of a circular cylinder,
every eigenvalue is either simple or has multiplicity two.

When a two-layer fluid is sloshing in a circular cylinder, all eigenvalues are also
either simple or have multiplicity two on every curve belonging either to $\bigl\{
\nu^{(+)}_m (\rho) \bigr\}_1^\infty$ or to $\bigl\{ \nu^{(-)}_m (\rho)
\bigr\}_1^\infty$, except for points of intersection of these curves; see
Figs.~\ref{fig:2} and \ref{fig:3}, where these points are marked according to the
multiplicity of the corresponding eigenvalue which is two (bullet), three (triangle)
or four (square). Most of these points have multiplicity four, whereas multiplicity
two is rather rare. There are no points of intersection on $\nu^{(-)}_1 (\rho),
\dots, \nu^{(-)}_4 (\rho)$ for all $\rho > 0$ when $h = 1/10$, but only $\nu^{(-)}_1
(\rho)$ has this property when $h = 1/2$.

Finally, an important property of the two-layer sloshing concerns the lowest
eigenvalue; namely, it is much smaller than that in the case of homogeneous fluid.
Indeed, in the example presented in Fig.~\ref{fig:2}, the lowest eigenvalue is
$\nu_{1} = \nu^{(-)}_{1} \approx 0.067182...$ for $\rho = 1.3$, which is less than
$4 \%$ of the value $1.750797...$\,---\,the lowest eigenvalue when a homogeneous
fluid is sloshing in the same cylinder. Moreover, the lowest eigenvalue is less than
$\bar{k}_1 \tanh (\bar{k}_1 / 10) \approx 0.335216...$ for all $\rho > 0$, and
$\nu_{1} (\rho) = \nu^{(-)}_{1} (\rho)$ approaches this value from below as $\rho
\to \infty$; see the bottom of Fig.~\ref{fig:2}. In the example presented in
Fig.~\ref{fig:3}, the lowest eigenvalue is $\nu_{1} = \nu^{(-)}_{1} \approx
0.169347...$ for $\rho = 1.3$; this is about $10 \%$ of the lowest eigenvalue in the
case of a homogeneous fluid. Again, the lowest eigenvalue is less than $\bar{k}_1
\tanh (\bar{k}_1 / 2) \approx 1.337025...$ for all $\rho > 0$; see the bottom of
Fig.~\ref{fig:3}.

\subsection{Properties of sloshing eigenfunctions in a circular cylinder}
\label{sect:2.3}

\begin{figure}[p]
\centering
 \SetLabels
 \L (0.92*-0.02) $y$\\
 \L (0.91*0.93) {\small $U^{(1)}_{11,-}$}\\
 \L (0.57*0.27) {\small $U^{(2)}_{11,-}$}\\
 \L (0.46*0.65) {\small $U^{(2)}_{1,+}$}\\
  \L (0.91*0.78) {\small $U^{(1)}_{1,+}$}\\
 \endSetLabels
 \leavevmode\AffixLabels{\includegraphics[width=80mm]{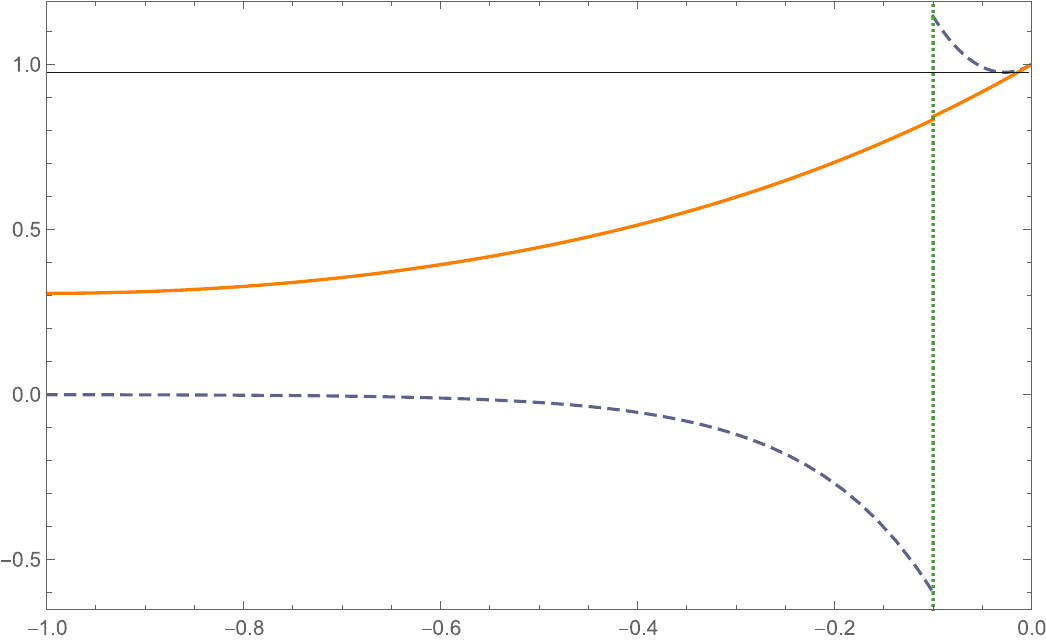}}
 \caption{If $h = 1/10$ in the cylinder with $D = \{ |x| < 1 \}$ and $d = 1$, then for
 $\rho \approx 1.804998...$ we have $\nu_1^{(+)} (\rho) = \nu_{11}^{(-)} (\rho) \approx
 1.736930...$ For the corresponding eigenfunctions \eqref{eigen}, the amplitude factors
 $U^{(1,2)}_{1,+} (y)$ (solid lines) and $U^{(1,2)}_{11,-} (y)$ (dashed lines) are
 plotted.}
\label{fig:4}
%
\vspace{2.4mm}
%
\centering
 \SetLabels
 \L (0.92*-0.02) $y$\\
 \L (0.895*0.935) {\small $U^{(1)}_{8,-}$}\\
 \L (0.47*0.13) {\small $U^{(2)}_{8,-}$}\\
 \L (0.55*0.47) {\small $U^{(2)}_{2,+}$}\\
  \L (0.91*0.77) {\small $U^{(1)}_{2,+}$}\\
 \endSetLabels
 \leavevmode\AffixLabels{\includegraphics[width=80mm]{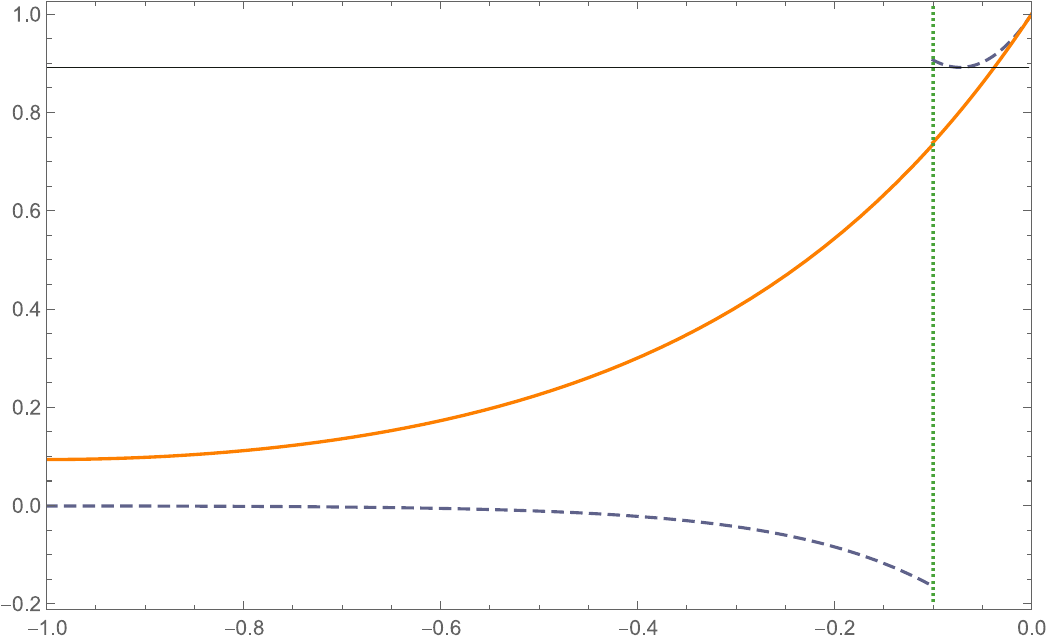}}
 \caption{If $h = 1/10$ in the cylinder with $D = \{ |x| < 1 \}$ and $d = 1$, then for
 $\rho \approx 6.380638...$ we have $\nu_2^{(+)} (\rho) = \nu_{8}^{(-)} (\rho) \approx
 3.032217...$ For the corresponding eigenfunctions \eqref{eigen}, the amplitude factors
 $U^{(1,2)}_{2,+} (y)$ (solid lines) and $U^{(1,2)}_{8,-} (y)$ (dashed lines) are
 plotted.}
\label{fig:5}
%
\vspace{2.4mm}
%
\centering
 \SetLabels
 \L (0.92*-0.02) $y$\\
 \L (0.895*0.93) {\small $U^{(1)}_{144,-}$}\\
 \L (0.765*0.33) {\small $U^{(2)}_{144,-}$}\\
 \L (0.7*0.57) {\small $U^{(2)}_{3,+}$}\\
  \L (0.905*0.485) {\small $U^{(1)}_{3,+}$}\\
 \endSetLabels
 \leavevmode\AffixLabels{\includegraphics[width=80mm]{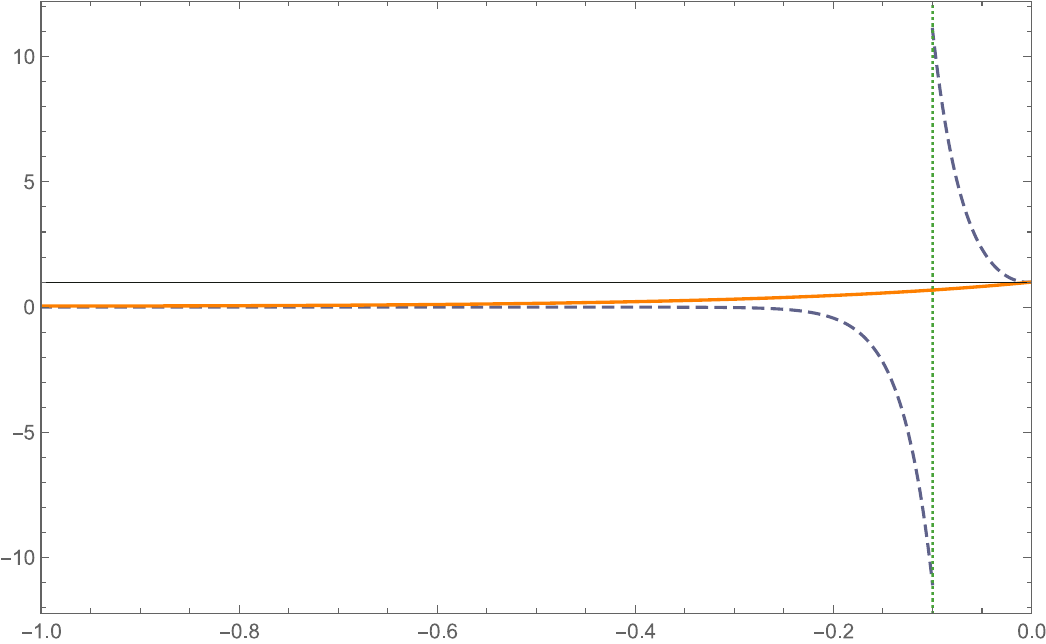}}
 \caption{If $h = 1/10$ in the cylinder with $D = \{ |x| < 1 \}$ and $d = 1$, then for
 $\rho \approx 1.270007...$ we have $\nu_3^{(+)} (\rho) = \nu_{144}^{(-)} (\rho) \approx
 3.827646...$ For the corresponding eigenfunctions \eqref{eigen}, the amplitude factors
 $U^{(1,2)}_{3,+} (y)$ (solid lines) and $U^{(1,2)}_{144,-} (y)$ (dashed lines) are
 plotted.}
\label{fig:6}
\end{figure}

To describe their behaviour, it is convenient to write formulas \eqref{u1} and
\eqref{u2} in the form:
\begin{equation}
u^{(j)}_{m,\pm} (x,y) = \bar{k}_m \, v_m (x) \, U^{(j)}_{m,\pm} (y) .
\label{eigen}
\end{equation}
Here $ j = 1,2$ is the number of layer, $\bar{k}_m$ is the $m$th element of sequence
\eqref{zeroes}, $v_m (x)$ is an eigenfunction of problem \eqref{fm} corresponding to
$\bar{k}_m^2$, and the dependence on $y$ is normalized so that $U^{(1)}_{m, \pm} (0)
= 1$, and expressed as follows:
\begin{gather}
U^{(1)}_{m, \pm} (y) = \cosh (\bar{k}_m y) +
\frac{\nu_m^{(\pm)}}{\bar{k}_m} \sinh (\bar{k}_m y) , \label{Udef} \\ U^{(2)}_{m,
\pm} (y) = \cosh (\bar{k}_m (y+d)) \frac{\nu_m^{(\pm)} \cosh(\bar{k}_m h) -
\bar{k}_m \sinh (\bar{k}_m h)}{\bar{k}_m \sinh (\bar{k}_m (d - h))} \, .
\label{Udef'}
\end{gather}
Lemma~\ref{lemma1} implies that $U^{(2)}_{m,\pm}(-d) \to 0$ as $m \to \infty$.

Thus, for every sloshing eigenfunction all its traces in horizontal cross-sections
of the cylinder have the same graph $v_m (x)$ (see Figs.~\ref{fig:7}--\ref{fig:9}
below), but multiplied by the amplitude factor $U^{(j)}_{m,\pm} (y)$ depending on
the depth in a different way in each of two layers; see formulas \eqref{Udef} and
\eqref{Udef'}. Let us describe several characteristic properties of these amplitude
factors; see also examples of their graphs in Figs.~\ref{fig:4}--\ref{fig:6}. These
examples are chosen so that eigenvalues belonging to different sequences coincide;
see intersections of different lines in Fig.~\ref{fig:2}:
\begin{align*} 
& \mbox{the case} \ \nu_1^{(+)} = \nu_{11}^{(-)} \approx 1.736930... \ \mbox{is
presented in Fig.~\ref{fig:4};} \\ & \mbox{the case} \ \nu_2^{(+)} = \nu_{8}^{(-)}
\approx 3.032217... \ \mbox{is presented in Fig.~\ref{fig:5};} \\ & \mbox{the case}
\ \nu_3^{(+)} = \nu_{144}^{(-)} \approx 3.827646... \ \mbox{is presented in
Fig.~\ref{fig:6}.}
\end{align*}

$\bullet$ All amplitude factors have a jump at $y = -h$, whose size is as follows:
\[ U^{(1)}_{m,\pm}(-h) - U^{(2)}_{m,\pm}(-h) = \frac{\bar{k}_m \sinh(\bar{k}_m d)
- \nu^{(\pm)}_m \cosh(\bar{k}_m d)}{\bar{k}_m \sinh(\bar{k}_m (d - h))} > 0 \ \
\mbox{for all} \ m = 1,2,\dots .
\]
Lemma 1 implies that $U^{(1)}_{m,+}(-h) - U^{(2)}_{m,+}(-h) \to 0$, whereas
$U^{(1)}_{m,-}(-h) - U^{(2)}_{m,-}(-h) \to \infty$ as $m \to \infty$.

$\bullet$ It follows by differentiation that both $U^{(1)}_{m,+}$ and
$U^{(2)}_{m,+}$ decrease monotonically with depth, remaining positive in view of
Proposition~\ref{prop4}, but attaining rather small values at the bottom; in particular:
\begin{align*} 
& U^{(2)}_{1,+}(-d) \approx 0.306440... \ \mbox{in Fig.~\ref{fig:4};} \\ &
U^{(2)}_{2,+}(-d) \approx 0.093735... \ \mbox{in Fig.~\ref{fig:5};} \\ &
U^{(2)}_{3,+}(-d) \approx 0.043316... \ \mbox{in Fig.~\ref{fig:6}.}
\end{align*}
Moreover, the corresponding jumps at $y = -h$ are also small; in particular:
\begin{align*} 
& U^{(1)}_{1,+}(-h) - U^{(2)}_{1,+}(-h) \approx 0.009632... \ \mbox{in
Fig.~\ref{fig:4};} \\ & U^{(1)}_{2,+}(-h) - U^{(2)}_{2,+}(-h) \approx 0.003785... \
\mbox{in Fig.~\ref{fig:5};} \\ & U^{(1)}_{3,+}(-h) - U^{(2)}_{3,+}(-h) \approx
0.000177... \ \mbox{in Fig.~\ref{fig:6}.}
\end{align*}

$\bullet$ The function $U^{(2)}_{m,-}$ increases monotonically with depth remaining
negative in view of Proposition~\ref{prop4}, and approaching very close to zero at
the bottom; in particular:
\begin{align*} 
& U^{(2)}_{11,-}(-d) \approx -0.000883... \ \mbox{in Fig.~\ref{fig:4};} \\ &
U^{(2)}_{8,-}(-d) \approx -0.000786... \ \mbox{in Fig.~\ref{fig:5};} \\ &
U^{(2)}_{144,-}(-d) \approx - (5.545769...) \cdot 10^{-12} \ \mbox{in
Fig.~\ref{fig:6},}
\end{align*}
which agrees with what was said after formula \eqref{Udef'}.

$\bullet$ The function $U^{(1)}_{m,-}$ is positive, but nonmonotonic on $(-h, 0)$.
Indeed, \eqref{Udef} implies that $\bigl[ U^{(1)}_{m,-} \bigr]' (y)$ vanishes only at
\[ y =  y_m^* = (2 \bar{k}_m)^{-1} \log \bigl[ \bigl( \bar{k}_m - \nu^{(-)}_m \bigr) /
\bigl( \bar{k}_m + \nu^{(-)}_m \bigr) \bigr] \, .
\]
Moreover, since $\bar{k}_m > \nu^{(-)}_m$ (see Proposition~\ref{prop4}), we have
\[ \bigl[ U^{(1)}_{m,-} \bigr]'' (y_*) = \bar{k}_m  \sqrt{\bar{k}_m^2 - \bigl[
\nu^{(-)}_m \bigr]^2} > 0 \, ,
\]
and so $U^{(1)}_{m,-} (y)$ attains its minimum at $y =  y_m^*$. These points and the
corresponding values of minima (they lie on thin horizontal lines in
Figs.~\ref{fig:4}--\ref{fig:6}) are as follows in our examples:
\begin{align*} 
& y_{11}^* \approx -0.027472... \, , \quad U^{(1)}_{11,-} (y_{11}^*) \approx
0.976237... \ \mbox{in Fig.~\ref{fig:4};} \\ & y_8^* \approx -0.072680... \, , \quad
U^{(1)}_{8,-} (y_8^*) \approx 0.891938... \ \mbox{in Fig.~\ref{fig:5};} \\ &
y_{144}^* \approx -0.003700... \, , \quad U^{(1)}_{144,-} (y_{144}^*) \approx
0.992926... \ \mbox{in Fig.~\ref{fig:6}.}
\end{align*}

$\bullet$ Unlike $U^{(1)}_{m,+}(-h) - U^{(2)}_{m,+}(-h)$, which is small for every
$m = 1,2,\dots$, the analogous jump with ``$-$'' instead of ``$+$'' has a
significant value as follows from our examples:
\begin{align*} 
& U^{(1)}_{11,-}(-h) - U^{(2)}_{11,-}(-h) \approx 1.745916... \ \mbox{in
Fig.~\ref{fig:4};} \\ & U^{(1)}_{8,-}(-h) - U^{(2)}_{8,-}(-h) \approx 1.071275... \
\mbox{in Fig.~\ref{fig:5};} \\ & U^{(1)}_{144,-}(-h) - U^{(2)}_{144,-}(-h) \approx
22.138106... \ \mbox{in Fig.~\ref{fig:6}.}
\end{align*}

\begin{figure}[t!]
\centering
 \SetLabels
 \L (0.08*0.14) $x_1$\\
 \L (0.61*0.14) $x_1$\\
 \L (0.345*0.08) $x_2$\\
 \L (0.875*0.08) $x_2$\\
 \L (-0.02*0.58) $v_1$\\
 \L (0.50*0.58) $v_{11}$\\
  \endSetLabels
 \leavevmode\mbox{\kern2.5mm}\AffixLabels{\includegraphics[width=63mm]{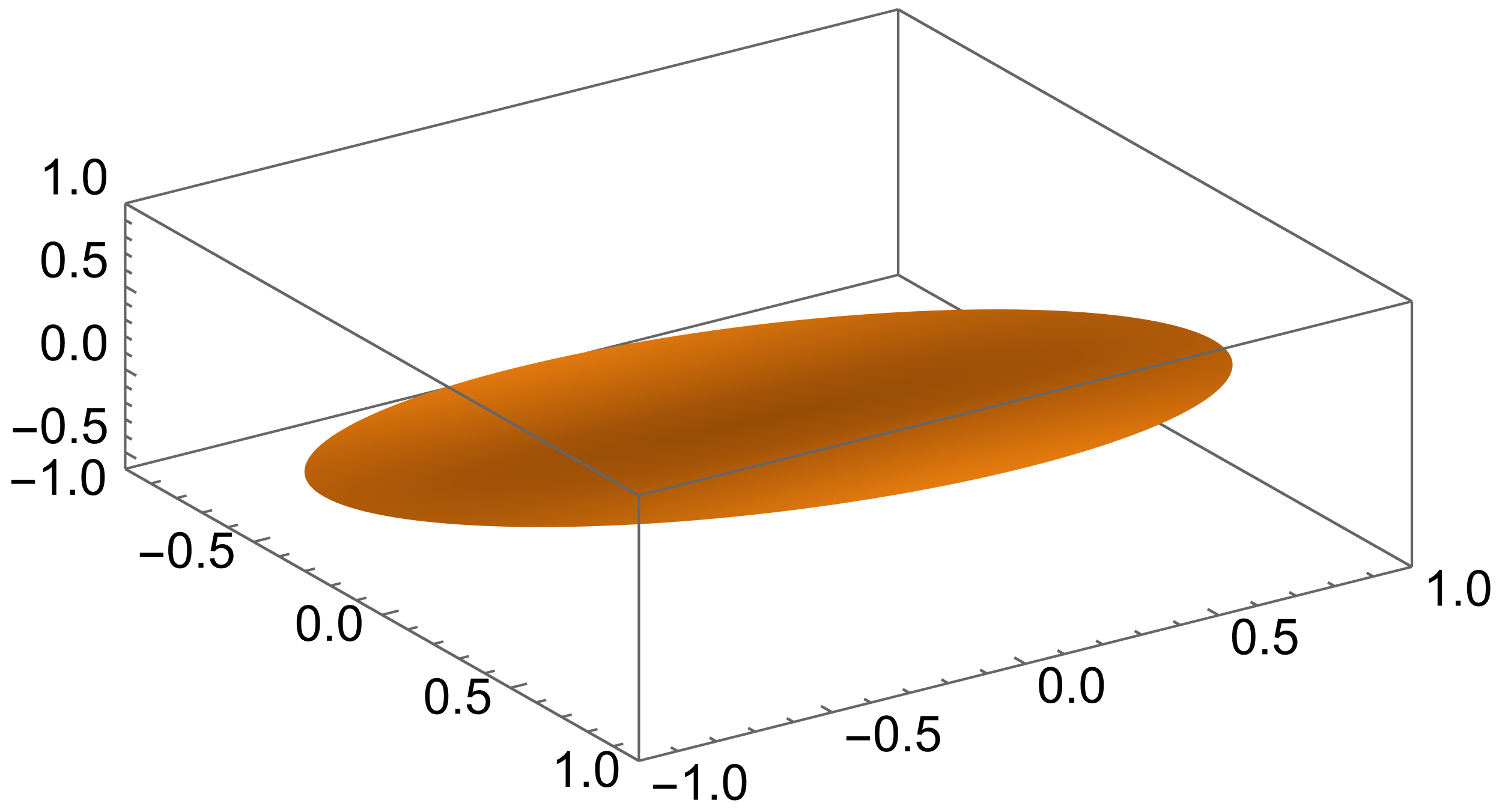}
 \kern8mm\includegraphics[width=65mm]{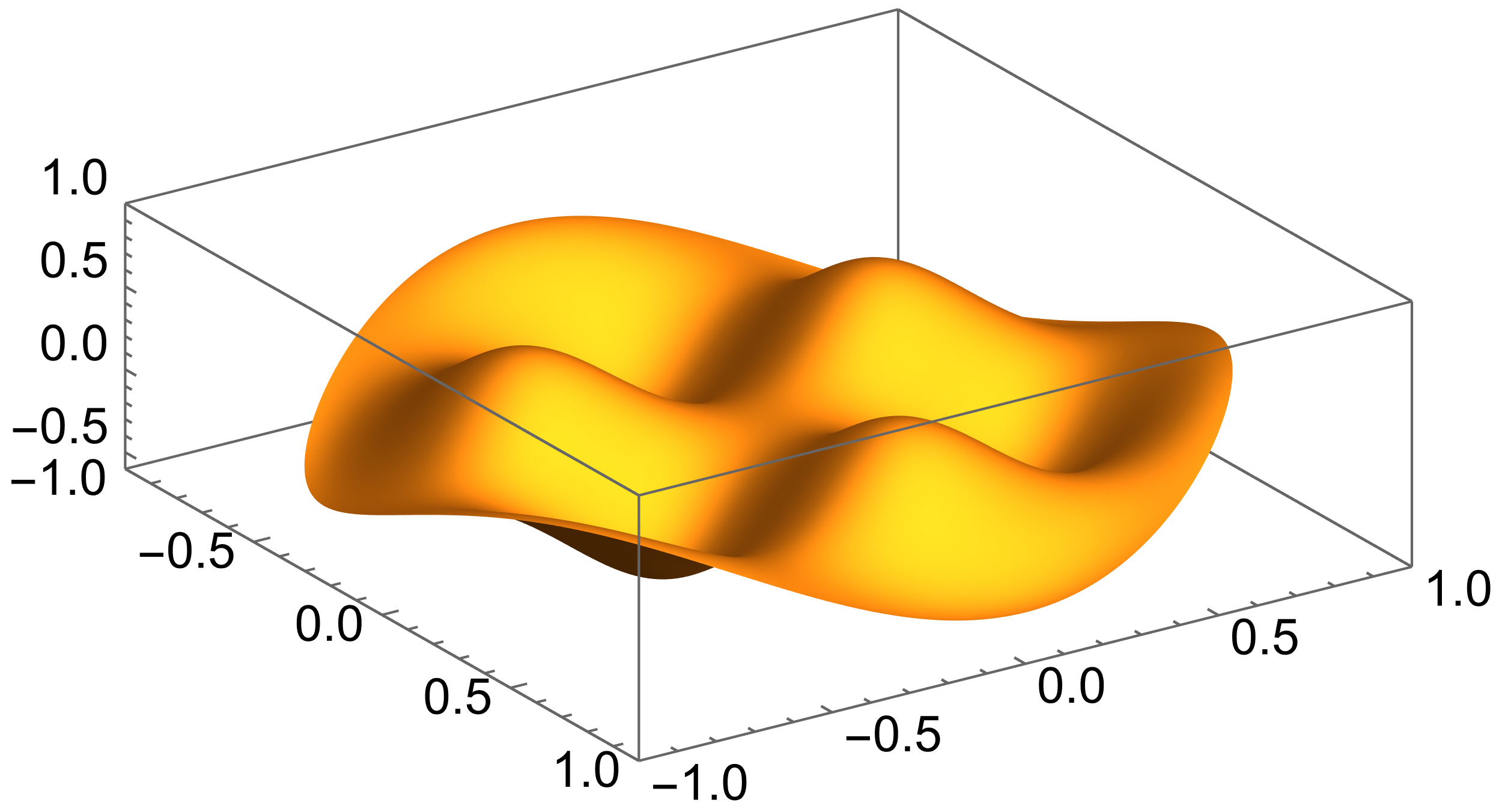}}
 \caption{If $h = 1/10$ in the cylinder with $D = \{ |x| < 1 \}$ and $d = 1$, then for
 $\rho \approx 1.804998...$ we have $\nu_1^{(+)} (\rho) = \nu_{11}^{(-)} (\rho) \approx
 1.736930...$ The graphs of traces on $D$ are plotted for the corresponding
 eigenfunctions \eqref{eigen}: $v_1 (x)$ (left) and $v_{11} (x)$ (right).}
  \label{fig:7}
%
\vspace{6mm}
%
\centering
 \SetLabels
 \L (0.08*0.14) $x_1$\\
 \L (0.61*0.14) $x_1$\\
 \L (0.345*0.08) $x_2$\\
 \L (0.875*0.08) $x_2$\\
 \L (-0.02*0.58) $v_2$\\
 \L (0.508*0.58) $v_8$\\
  \endSetLabels
 \leavevmode\mbox{\kern2.5mm}\AffixLabels{\includegraphics[width=63mm]{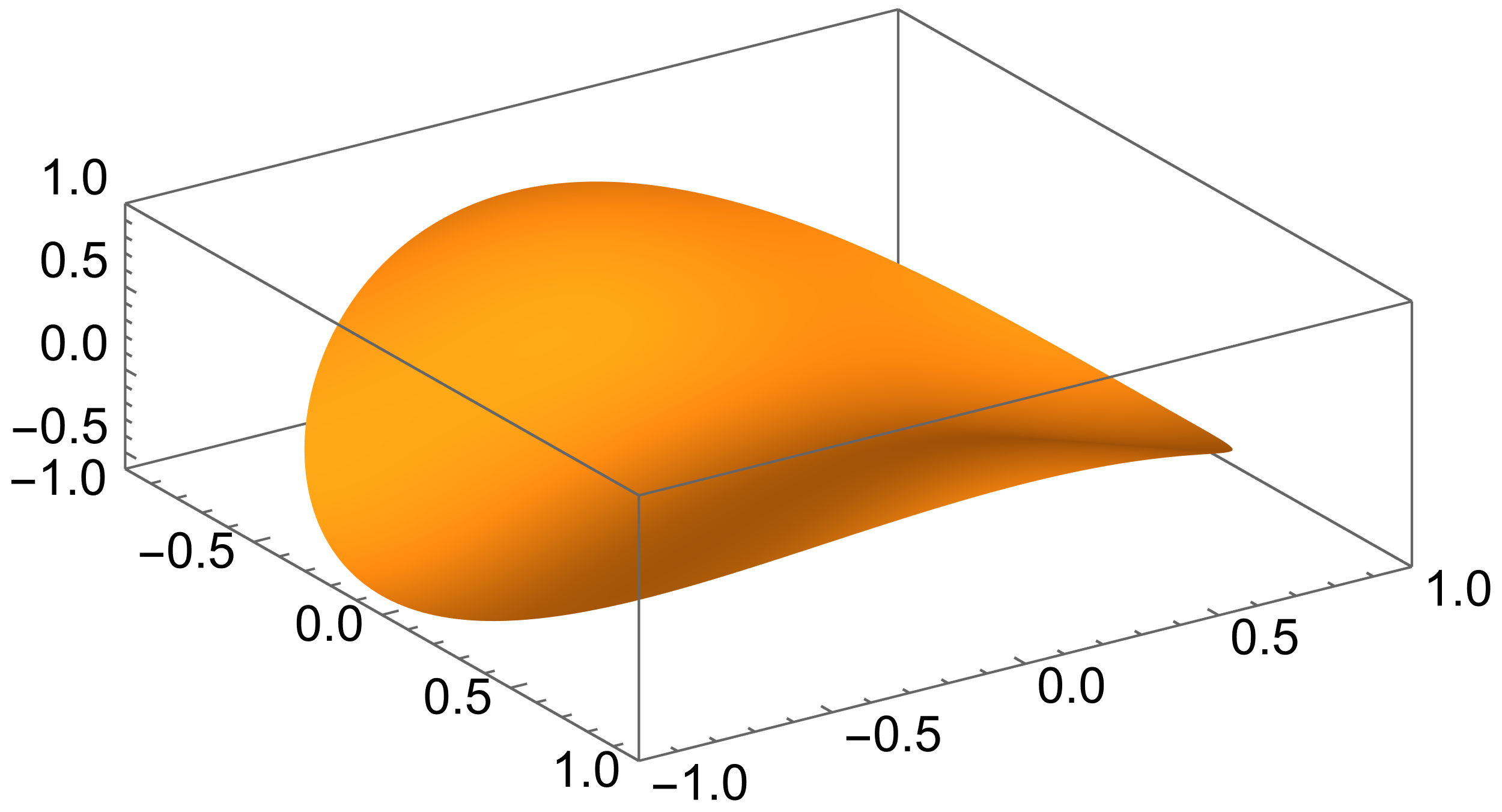}
 \kern8mm\includegraphics[width=65mm]{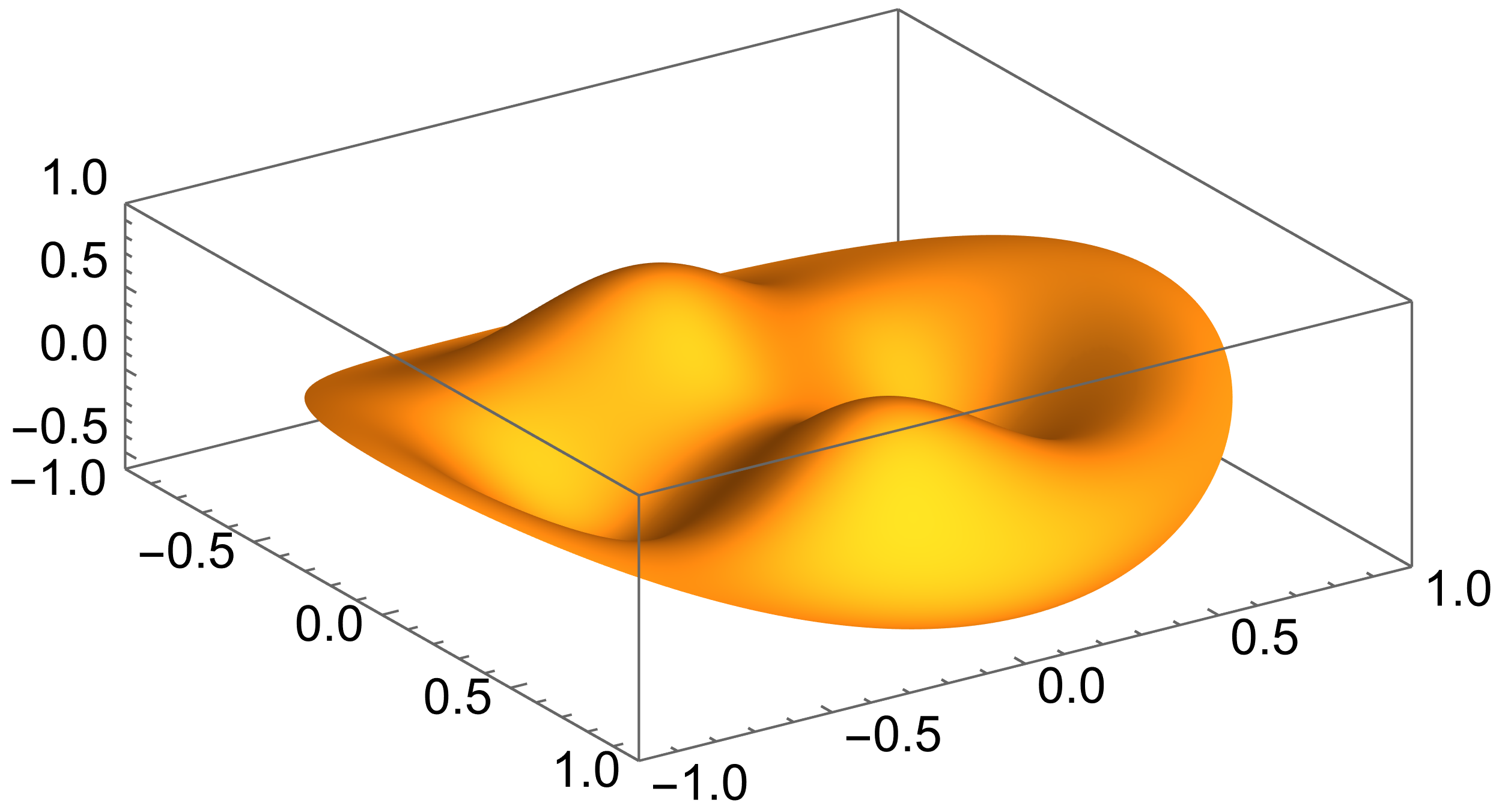}}
 \caption{If $h = 1/10$ in the cylinder with $D = \{ |x| < 1 \}$ and $d = 1$, then for
 $\rho \approx 6.380638...$ we have $\nu_2^{(+)} (\rho) = \nu_{8}^{(-)} (\rho) \approx
 3.032217...$ The graphs of traces on $D$ are plotted for the corresponding
 eigenfunctions \eqref{eigen}: $v_2(x)$ (left) and $v_8(x)$ (right).}
\label{fig:8}
%
\vspace{6mm}
%
\centering
 \SetLabels
 \L (0.08*0.14) $x_1$\\
 \L (0.61*0.14) $x_1$\\
 \L (0.345*0.08) $x_2$\\
 \L (0.875*0.08) $x_2$\\
 \L (-0.02*0.58) $v_3$\\
 \L (0.49*0.58) $v_{144}$\\
  \endSetLabels
 \leavevmode\mbox{\kern2.5mm}\AffixLabels{\includegraphics[width=64mm]{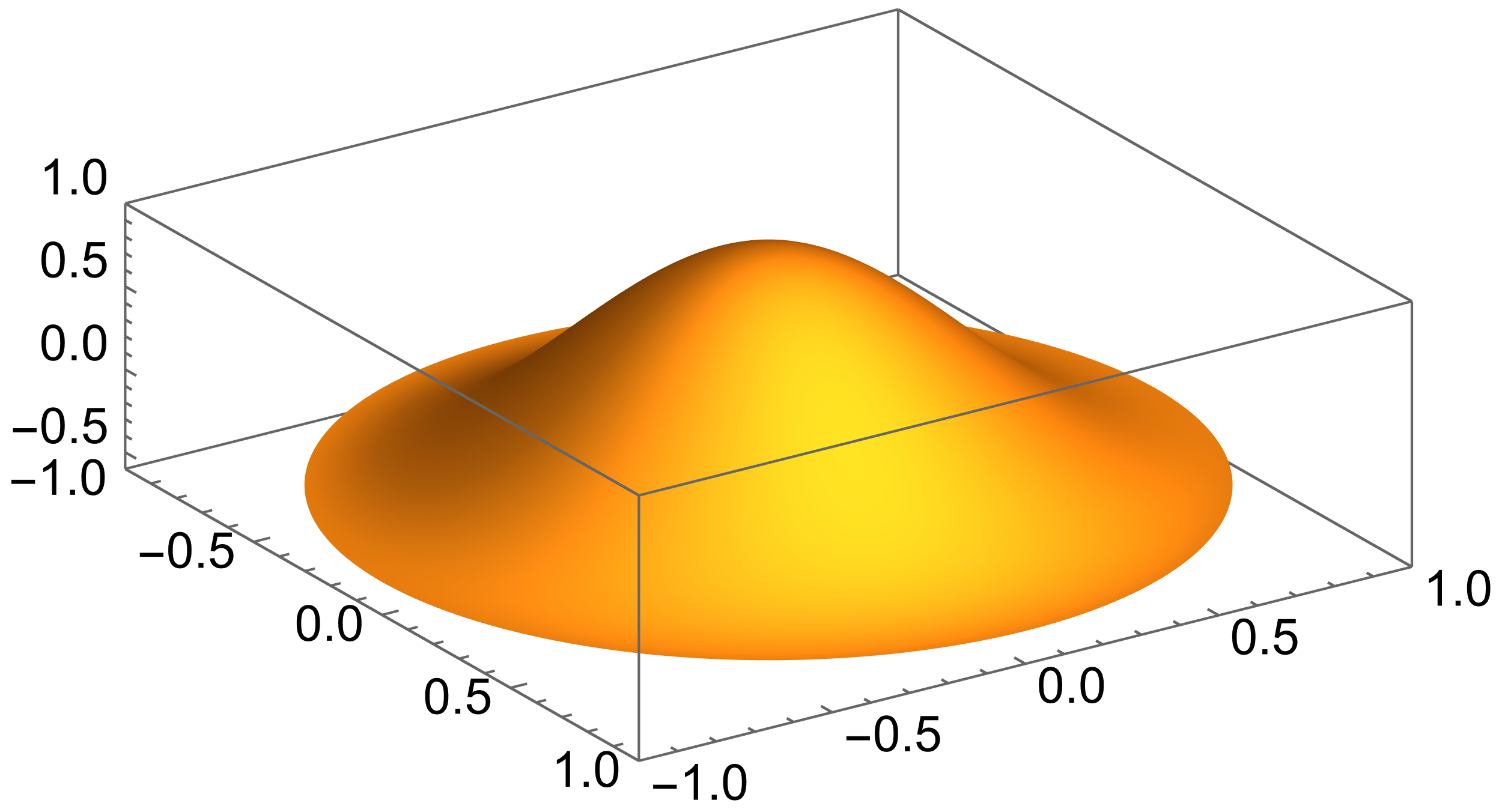}
 \kern8mm\includegraphics[width=64mm]{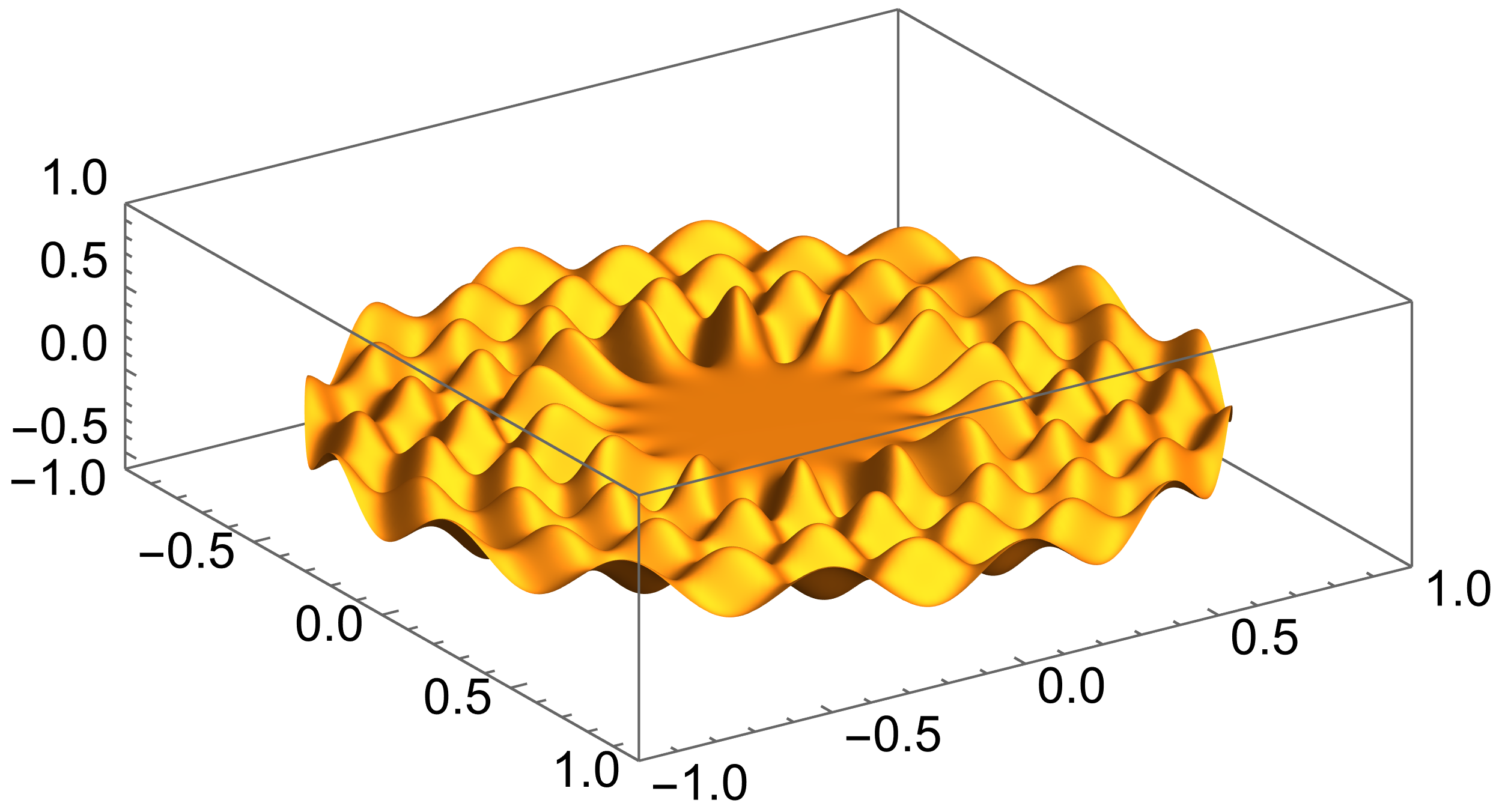}}
 \caption{If $h = 1/10$ in the cylinder with $D = \{ |x| < 1 \}$ and $d = 1$, then for
 $\rho \approx 1.270007...$ we have $\nu_3^{(+)} (\rho) = \nu_{144}^{(-)} (\rho) \approx
 3.827646...$ The graphs of traces on $D$ are plotted for the corresponding
 eigenfunctions \eqref{eigen}: $v_3(x)$ (left) and $v_{144}(x)$ (right).}
  \label{fig:9}
\end{figure}

Now, we illustrate the dependence of sloshing eigenfunctions \eqref{eigen} on $x$;
namely, the graphs of their traces on the cylinder's horizontal cross-section $D$
are shown in Figs.~\ref{fig:7}--\ref{fig:9}. When the multiplicity of
$\nu_m^{(\pm)}$ is two (this happens when $\bar{k}_m = j'_{p, s}$ and $p \neq 0$),
the eigenmode $v_m (x) = J_p (j'_{p,s} |x|) (x_1 / |x|)$ is plotted.

It is worth emphasizing the contrast between the left and right graphs in
Fig.~\ref{fig:9}, because these eigenmodes correspond to the same eigenvalue
$\approx 3.827646...$, but appearing in different sequences. The contrast becomes
clear when one takes into account the large difference in numbers of this eigenvalue
in each of these sequences. On the other hand, if the difference in numbering is not
so large, then there is no such a contrast in eigenmodes corresponding to the same
eigenvalue; see Figs.~\ref{fig:7} and \ref{fig:8}.




\section{The effect of a radial baffle in a circular cylinder}
\label{sect:3}

In the recent paper \cite{KM}, the behaviour of sloshing eigenvalues was studied in
the case when a vertical circular cylinder of finite depth is occupied by a
homogeneous fluid. The effect of breaking the axial symmetry due to a radial baffle
was analysed; the baffle was assumed to extend throughout the fluid depth. It was
demonstrated that all eigenvalues are simple in the presence of baffle, whereas the
lowest eigenvalue and many others have multiplicity two in its absence. These
results follow from properties of spectra of two problems; one of them is problem
\eqref{fm} in a disk, whereas the other one is a similar problem in a disk with a
radial cut from the centre to the boundary. Any other parameter (cylinder's depth,
etc.) is unimportant for these results.

It is demonstrated above (see Figs.~\ref{fig:2} and \ref{fig:3}) that sloshing
eigenvalues have multiplicity depending on $\rho$ when a two-layer fluid occupies a
circular cylinder without a radial baffle. The same occurs in the case when a radial
baffle is present; see Fig.~\ref{fig:10}, where the case of cylinder, whose
cross-section is $D = \{ |x|
 < 1 \} \setminus \{ x_1 \geq 0, \ x_2 = 0 \}$ (the unit disk with a radial cut),
$d = 1$ and $h = 1/10$ is illustrated; namely, the graphs of simple eigenvalues
$\nu_m^{(+)} (\rho)$ (bold lines) and $\nu_m^{(-)} (\rho)$ (see formula \eqref{nupm}
with $k = k_m$) are plotted for $m = 1,2,\dots,7$ and $m = 1,2,\dots,21,281$,
respectively; the vertical lines at $\rho = 1.3$ and $\rho = 13.5$ correspond to the
gasoline/water and water/mercury superpositions, respectively. Hence the
multiplicity is two at every point of intersection. 

\begin{figure}[t!]
\centering
 \SetLabels
 \L (0.956*-0.02) $\rho$\\
 \L (0.035*0.708) {\small{}$\nu^{(-)}_{280,...,22}$}\\
 \endSetLabels
 \leavevmode\AffixLabels{\includegraphics[width=125mm]{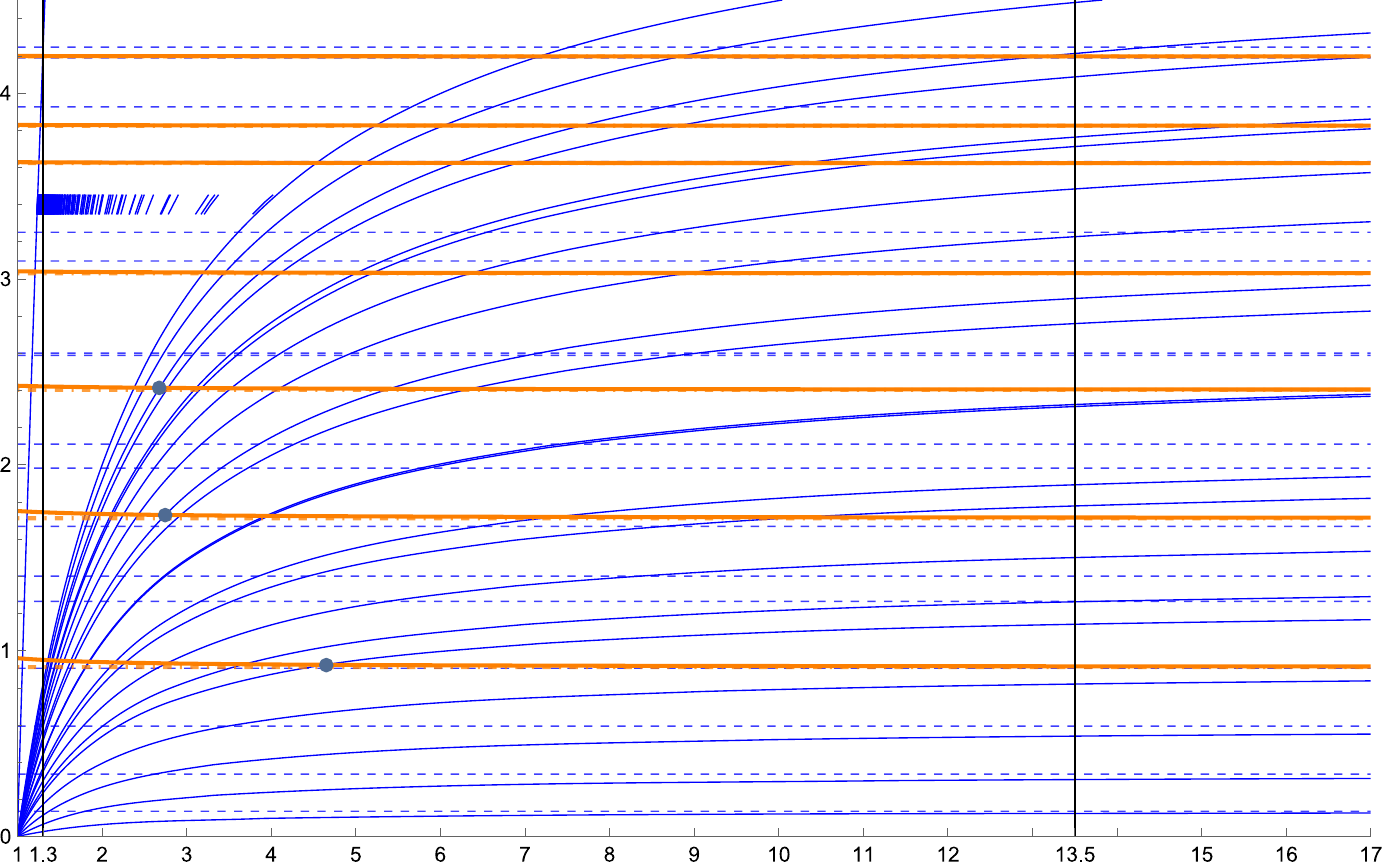}}
 \vspace{2mm}
 \caption{For the cylinder with $d = 1$, $h = 1/10$ and $D = \{ |x| < 1 \} \setminus \{
 x_1 \geq 0, \ x_2 = 0 \}$ (the unit disk with the radial cut along the $x_1$-axis)
 the graphs $\nu^{(+)}_{1,...,7} (\rho)$ (bold lines) and $\nu^{(-)}_{1,...,21,281}
 (\rho)$ are plotted for $\rho \in [0, 17]$, whereas the positions of
 $\nu^{(-)}_{22,...,280} (\rho)$ are just located between $\nu^{(+)}_4$ and
 $\nu^{(+)}_5$. The corresponding asymptotes are shown by dashed lines. Bullets mark
 eigenvalues for which properties of the corresponding eigenfunctions are investigated.}
  \label{fig:10}
\end{figure}

Here the initial eigenvalues of problem \eqref{fm} are defined by
\begin{align*}
&k_1 = j'_{1/2,1} \approx 1.165561...,&& k_2 = j'_{1,1} \approx 1.841183...,&& k_3
= j'_{3/2,1} \approx 2.460535...,\\ &k_4 = j'_{2,1} \approx 3.054236...,&& k_5 =
j'_{5/2,1} \approx 3.632797...,&& k_6 = j'_{0,1} \approx 3.831705...,\\
 &k_7 = j'_{3,1} \approx 4.201188...,&& k_8 = j'_{1/2,2} \approx 4.604216...,&& k_9
= j'_{7/2,1} \approx 4.762196...,\\ &k_{10} = j'_{4,1} \approx 5.317553...,&&
 k_{11} = j'_{1,2} \approx 5.331442...,&& k_{12} = j'_{9/2,1} \approx 5.868419...,\\
&k_{13} = j'_{3/2,2} \approx 6.029292...,&& k_{14} = j'_{5,1} \approx
6.415616...,&& k_{15} = j'_{2,2} \approx 6.706133..., \\ &k_{16} = j'_{11/2,1}
\approx 6.959745...,&& k_{17} = j'_{0,2} \approx 7.015586...,&& k_{18} = j'_{5/2,2}
\approx 7.367008...,\\ &k_{19} = j'_{6,1} \approx 7.501266...,&& k_{20} = j'_{1/2,3}
\approx 7.789883...,&& k_{21} = j'_{3,2} \approx 8.015236...,
\end{align*}
whereas, $k_{280} = j'_{0,10} \approx 32.189679...$, $k_{281} = j'_{14,5} \approx
32.236969...$; recall that $j'_{\mu,s}$ is the $s$th positive zero of $J'_\mu$ (for
$\mu = 0$ this numbering differs from that used in \cite{AS}, where $j'_{0,s}$ is
the $s$th nonnegative zero).

We see that below the level~4 there are twice as many solid lines in
Fig.~\ref{fig:10} as in Fig.~\ref{fig:2}, whereas the number of eigenvalues
$\nu_m^{(-)} (1.3)$ below this level is also almost two times larger in
Fig.~\ref{fig:10} comparing with Fig.~\ref{fig:2}; the sequence in Fig.~\ref{fig:10}
for $\rho=1.3$ is as follows:
\begin{multline*}
\bigl\{ \nu^{(-)}_{1}, \dots, \nu^{(-)}_{25}, \nu^{(+)}_{1}, \nu^{(-)}_{26}, \dots,
\nu^{(-)}_{60}, \nu^{(+)}_{2}, \nu^{(-)}_{61}, \dots, \nu^{(-)}_{102},
\nu^{(+)}_{3}, \nu^{(-)}_{103}, \dots, \nu^{(-)}_{153}, \nu^{(+)}_{4}, \\
\nu^{(-)}_{154}, \dots, \nu^{(-)}_{213}, \nu^{(+)}_{5}, \nu^{(-)}_{214}, \dots,
\nu^{(-)}_{235}, \nu^{(+)}_{6}, \nu^{(-)}_{236}, \dots \nu^{(-)}_{281},
\nu^{(+)}_{7}, \nu^{(-)}_{282}, \dots \bigr\} \, , 
\end{multline*}
where
\begin{align*}
&\nu^{(-)}_{1} \approx 0.027974...,&& \nu^{(-)}_{25} \approx 0.907722...,&&
\nu^{(+)}_{1} \approx 0.949859...,&& \nu^{(-)}_{26} \approx 0.976046...,\\
&\nu^{(-)}_{60} \approx 1.718432...,&& \nu^{(+)}_{2} \approx 1.744132...,&&
\nu^{(-)}_{61} \approx 1.760512...,&& \nu^{(-)}_{102} \approx 2.416412...,\\
&\nu^{(+)}_{3} \approx 2.421447...,&& \nu^{(-)}_{103} \approx 2.430348...,&&
\nu^{(-)}_{153} \approx 3.021836...,&& \nu^{(+)}_{4} \approx 3.039090...,\\
&\nu^{(-)}_{154} \approx 3.039249...,&& \nu^{(-)}_{213} \approx 3.623983...,&&
\nu^{(+)}_{5} \approx 3.627035...,&& \nu^{(-)}_{214} \approx 3.627614...,\\
&\nu^{(-)}_{235} \approx 3.818197...,&& \nu^{(+)}_{6} \approx 3.827601...,&&
\nu^{(-)}_{236} \approx 3.829050...,&& \nu^{(-)}_{281} \approx 4.197291...,\\
&\nu^{(+)}_{7} \approx 4.199020...,&& \nu^{(-)}_{282} \approx 4.207027...
\end{align*}
Notice that when a homogeneous fluid occupies the same container just the eigenvalue
$\nu_8$ has the value close to $\nu^{(-)}_{282}$.

For this geometry, there is no asymptotic formula for the counting function of
eigenvalues similar to obtained in Proposition~\ref{prop5}. Indeed, Weyl's law is
not valid for cut disks because of singularities that eigenfunctions have at the
cut's tip.

Furthermore, it is remarkable that the value of $\nu^{(-)}_1 (1.3)$ in the presence
of baffle is more than 2.4 times smaller than that in its absence. It is worth
reminding (see \cite{KM}), that in a homogeneous fluid with a baffle, the value of
$\nu_1$ for $d=1$ is about $1.8$ times smaller than that in the absence; see
Fig.~\ref{fig:ratio'}. This demonstrates that the effect of a baffle manifests
itself more strongly in a two-layer fluid.

Let us describe the effect of a baffle for arbitrary values of $d$, $h$ and $\rho$.
In order to distinguish $\nu^{(-)}_1 (\rho,h,d)$ from the analogous eigenvalue in
the absence of baffle, we denote the latter by $\bar \nu^{(-)}_1 (\rho,h,d)$. We
define the magnification factor (it compares the case of a two-layer fluid with that
of a homogeneous one) as follows:
\[ K (\rho,h,d) =
\frac{\nu^{(-)}_1 (\rho,h,d)}{\bar \nu^{(-)}_1 (\rho,h,d)} \biggm/
\frac{k_1 \tanh (k_1 d)}{\vphantom{\nu^{(-)}_1}\bar{k}_1 \tanh (\bar{k}_1 d)} \, .
\]
In both ratios the fundamental sloshing eigenvalue in a circular cylinder without
the radial baffle stands in the denominator. It follows from numerical computations
that
\[ K (\rho,h,d) \geq K (1,h,d) \geq 1 .
\]
The dependence of $K (1,h,d)$ (it provides the lower bound for the magnification
factor) on $h/d \in (0, 1/2]$ and $d \in (0, 5]$ is shown in Fig.~\ref{fig:ratio}.

\begin{figure}[t!]
\centering
 \SetLabels
 \L (0.92*-0.04) {\small$d$}\\
 \L (-0.26*0.78) {\small$\displaystyle\frac{\bar{k}_1 \tanh (\bar{k}_1 d)}{k_1 \tanh (k_1
 d)}$}\\
 \endSetLabels
 \leavevmode\AffixLabels{\includegraphics[width=75mm]{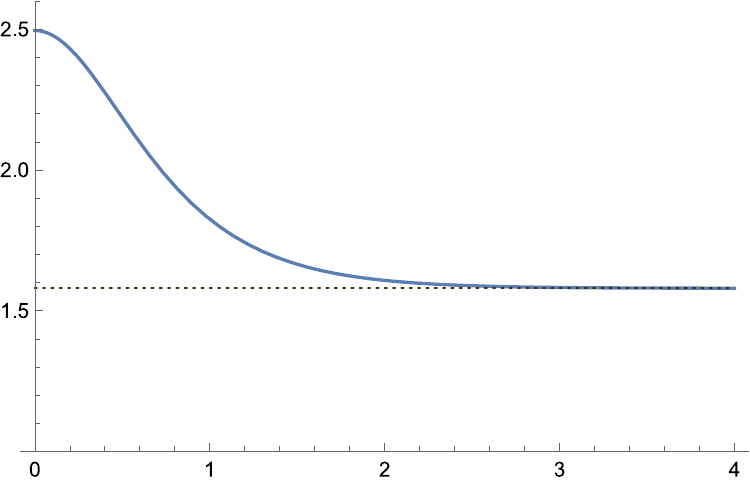}}
 \caption{The graph of  the ratio of the fundamental sloshing eigenvalues for a
 circular cylinder without (numerator) and with the radial baffle is plotted for $d
 \in (0, 4]$. It monotonically decreases from the limit value $\bar{k}_1^2 / k_1^2
 \approx 2.495307...$ as $d \to 0$ and asymptotes the value $\bar{k}_1 / k_1 \approx
 1.579654...$ as $d \to \infty$.}
 \label{fig:ratio'}
\end{figure}

\begin{figure}
\centering
 \SetLabels
 \L (-0.022*0.585) $K$\\
 \L (0.68*0.06) $d$\\
 \L (0.14*0.08) $h/d$\\
 \endSetLabels
 \leavevmode\AffixLabels{\includegraphics[width=75mm]{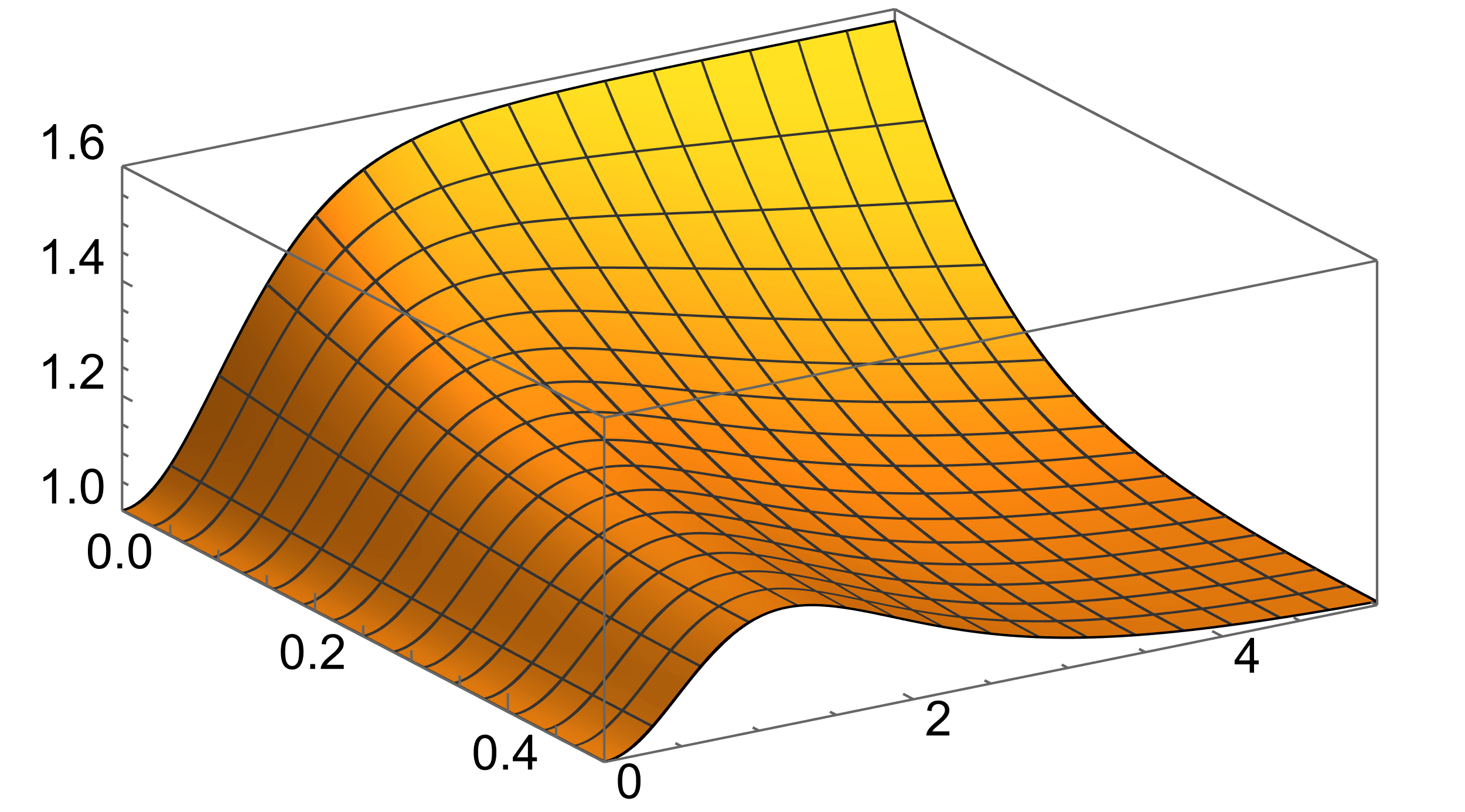}}
 \vspace{-2mm}
 \caption{The graph of $K (1,h,d)$\,---\,the lower bound for the magnification factor
 which increases the ratio $k_1\tanh k_1 d / \bar{k}_1\tanh \bar{k}_1 d$ (see
 Fig.~\ref{fig:ratio'}, where the graph of its reciprocal is plotted) in a two-layer
 fluid\,---\,is plotted for $h/d \in (0, 1/2]$ and $d \in (0, 5]$.}
 \label{fig:ratio}
\end{figure}

\begin{figure}[p]
\centering
 \SetLabels
 \L (0.92*-0.02) $y$\\
 \L (0.905*0.97) {\small $U^{(1)}_{5,-}$}\\
 \L (0.57*0.17) {\small $U^{(2)}_{5,-}$}\\
 \L (0.46*0.71) {\small $U^{(2)}_{1,+}$}\\
  \L (0.905*0.85) {\small $U^{(1)}_{1,+}$}\\
 \endSetLabels
 \leavevmode\AffixLabels{\includegraphics[width=80mm]{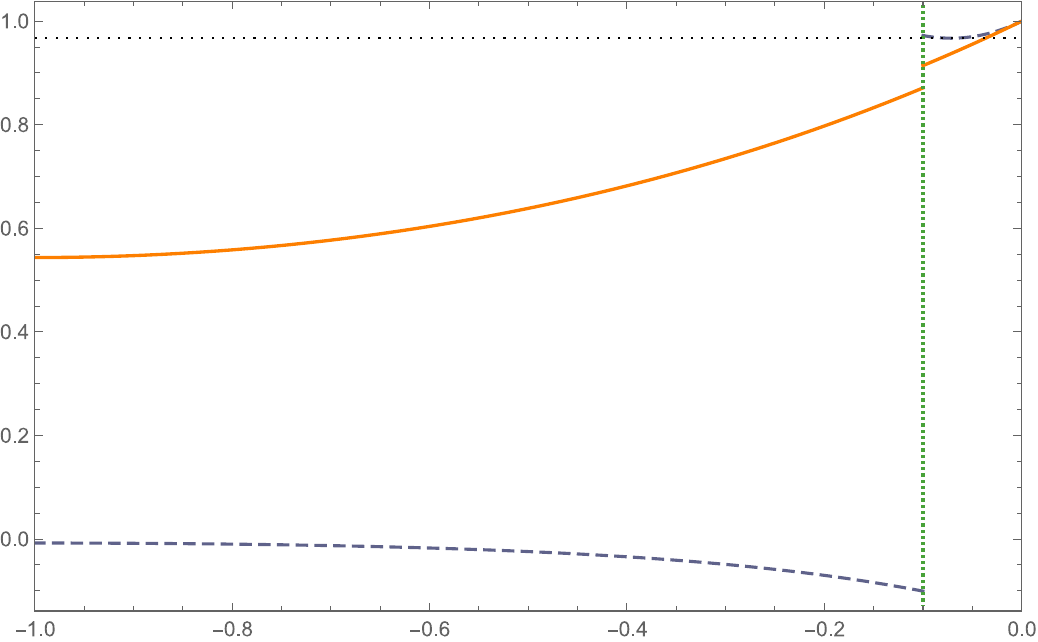}}
 \caption{If $h = 1/10$ in the cylinder with $D = \{ |x| < 1 \} \setminus \{ x_1 \geq 0,
 \ x_2 = 0 \}$ and $d = 1$, then for $\rho \approx 4.651901...$ we have $\nu_1^{(+)}
 (\rho) = \nu_{5}^{(-)} (\rho) \approx 0.923307...$ The amplitude factors
 $U^{(1,2)}_{1,+} (y)$ (solid lines) and  $U^{(1,2)}_{5,-} (y)$ (dashed lines) of
 eigenfunctions \eqref{eigen} are plotted.}
\label{fig:11}
%
\vspace{2.4mm}
%
\centering
 \SetLabels
 \L (0.92*-0.02) $y$\\
 \L (0.895*0.95) {\small $U^{(1)}_{13,-}$}\\
 \L (0.47*0.31) {\small $U^{(2)}_{13,-}$}\\
 \L (0.55*0.67) {\small $U^{(2)}_{2,+}$}\\
  \L (0.905*0.79) {\small $U^{(1)}_{2,+}$}\\
 \endSetLabels
 \leavevmode\AffixLabels{\includegraphics[width=80mm]{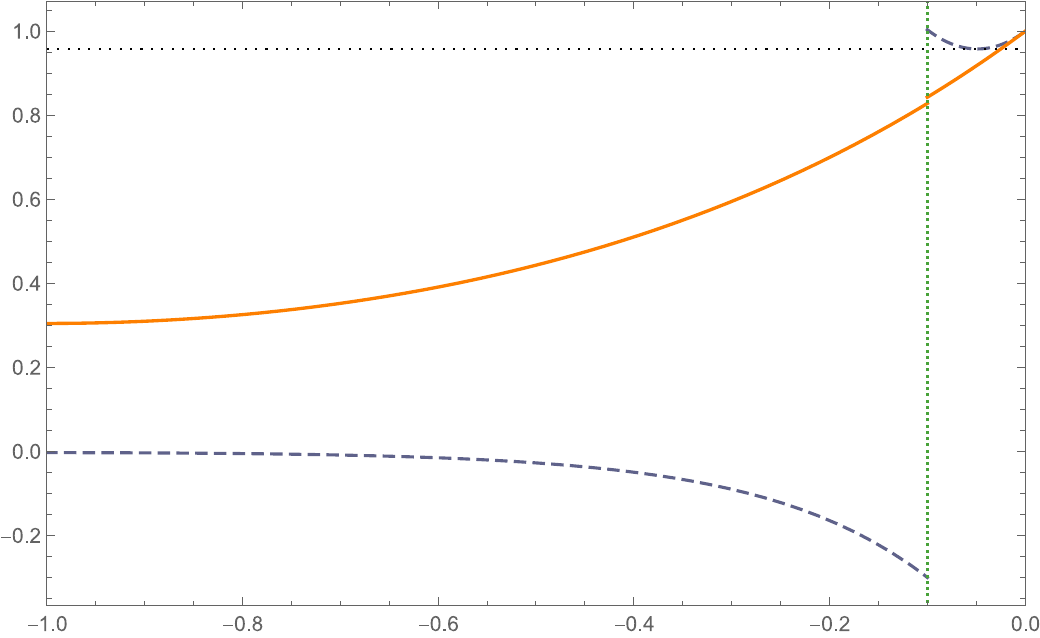}}
 \caption{If $h = 1/10$ in the cylinder with $D = \{ |x| < 1 \} \setminus \{ x_1 \geq 0,
 \ x_2 = 0 \}$ and $d = 1$ and $d = 1$, then for $\rho \approx 2.750213...$ we have
 $\nu_2^{(+)} (\rho) = \nu_{13}^{(-)} (\rho) \approx 1.729565...$ The amplitude factors
 $U^{(1,2)}_{2,+} (y)$ (solid lines) and $U^{(1,2)}_{13,-} (y)$ (dashed lines) of
 eigenfunctions \eqref{eigen} are plotted.}
\label{fig:12}
%
\vspace{2.4mm}
%
\centering
 \SetLabels
 \L (0.92*-0.02) $y$\\
 \L (0.895*0.95) {\small $U^{(1)}_{19,-}$}\\
 \L (0.65*0.21) {\small $U^{(2)}_{19,-}$}\\
 \L (0.6*0.635) {\small $U^{(2)}_{3,+}$}\\
  \L (0.905*0.78) {\small $U^{(1)}_{3,+}$}\\
 \endSetLabels
 \leavevmode\AffixLabels{\includegraphics[width=80mm]{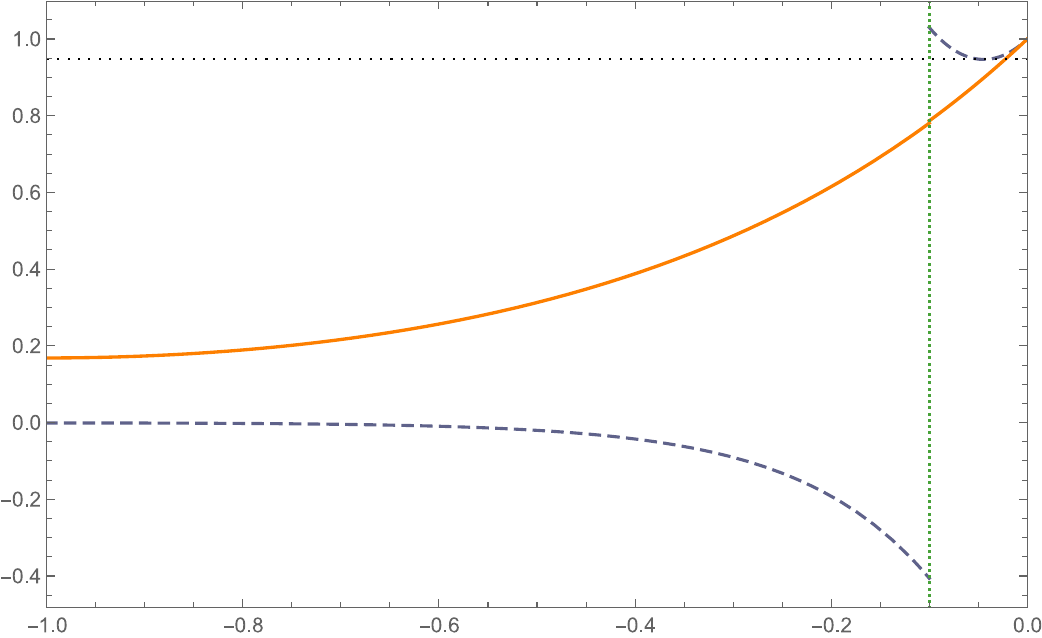}}
 \caption{If $h = 1/10$ in the cylinder with $D = \{ |x| < 1 \} \setminus \{ x_1 \geq 0,
 \ x_2 = 0 \}$ and $d = 1$, then for $\rho \approx  2.678476...$ we have $\nu_3^{(+)}
 (\rho) = \nu_{19}^{(-)} (\rho) \approx 2.413589...$ The amplitude factors
 $U^{(1,2)}_{3,+} (y)$ (solid lines) and $U^{(1,2)}_{19,-} (y)$ (dashed lines) of
 eigenfunctions \eqref{eigen} are plotted.}
\label{fig:13}
\end{figure}

\begin{figure}[t!]
\centering
 \SetLabels
 \L (0.08*0.14) $x_1$\\
 \L (0.61*0.14) $x_1$\\
 \L (0.345*0.08) $x_2$\\
 \L (0.875*0.08) $x_2$\\
 \L (-0.02*0.58) $v_1$\\
 \L (0.50*0.58) $v_{5}$\\
  \endSetLabels
 \leavevmode\mbox{\kern2.5mm}\AffixLabels{\includegraphics[width=63mm]{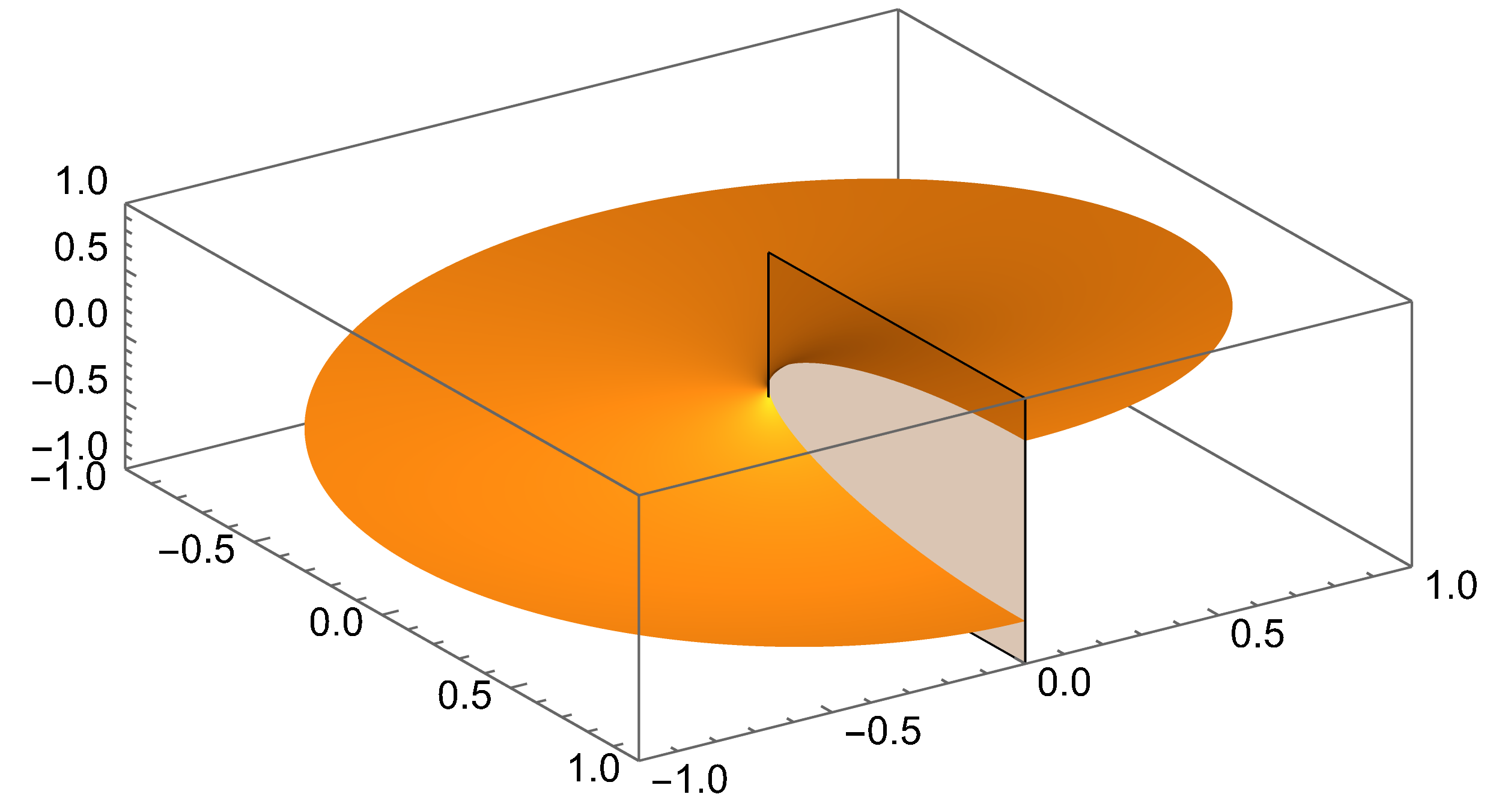}\kern6mm\includegraphics[width=63mm]{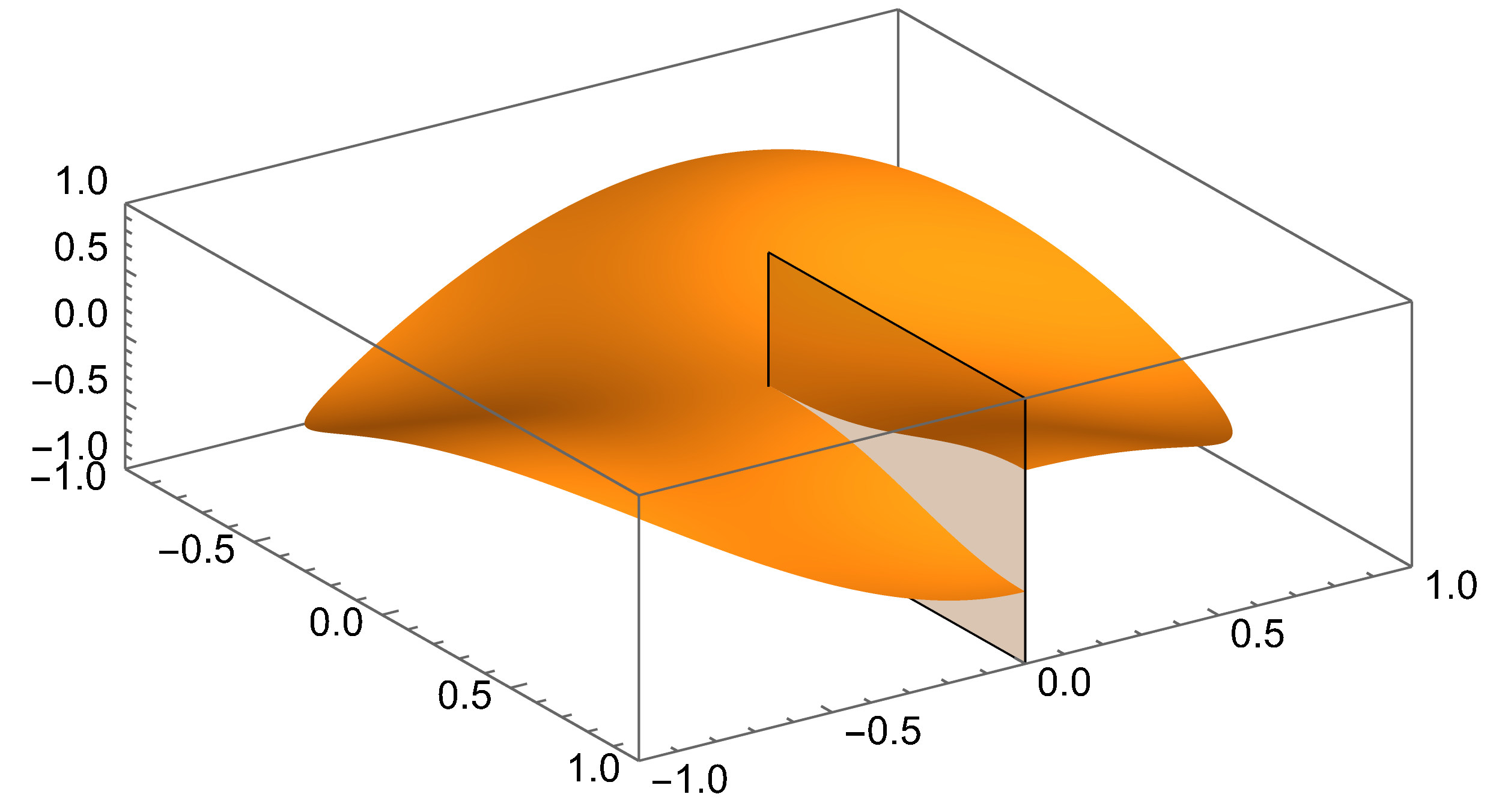}}
 \caption{If $h = 1/10$ in the cylinder with $D = \{ |x| < 1 \} \setminus \{ x_1 \geq 0,
 \ x_2 = 0 \}$ and $d = 1$, then for $\rho \approx 4.651901...$ we have $\nu_1^{(+)}
 (\rho) = \nu_{5}^{(-)} (\rho) \approx 0.923307...$ The graphs of traces on $D$ are
 plotted for the corresponding eigenfunctions \eqref{eigen}: $v_1 (x)$ (left) and $v_{5}
 (x)$ (right).}
  \label{fig:14}
%
\vspace{2.4mm}
%
\centering
 \SetLabels
 \L (0.08*0.14) $x_1$\\
 \L (0.61*0.14) $x_1$\\
 \L (0.345*0.08) $x_2$\\
 \L (0.875*0.08) $x_2$\\
 \L (-0.02*0.58) $v_2$\\
 \L (0.495*0.58) $v_{13}$\\
  \endSetLabels
 \leavevmode\mbox{\kern2.5mm}\AffixLabels{\includegraphics[width=63mm]{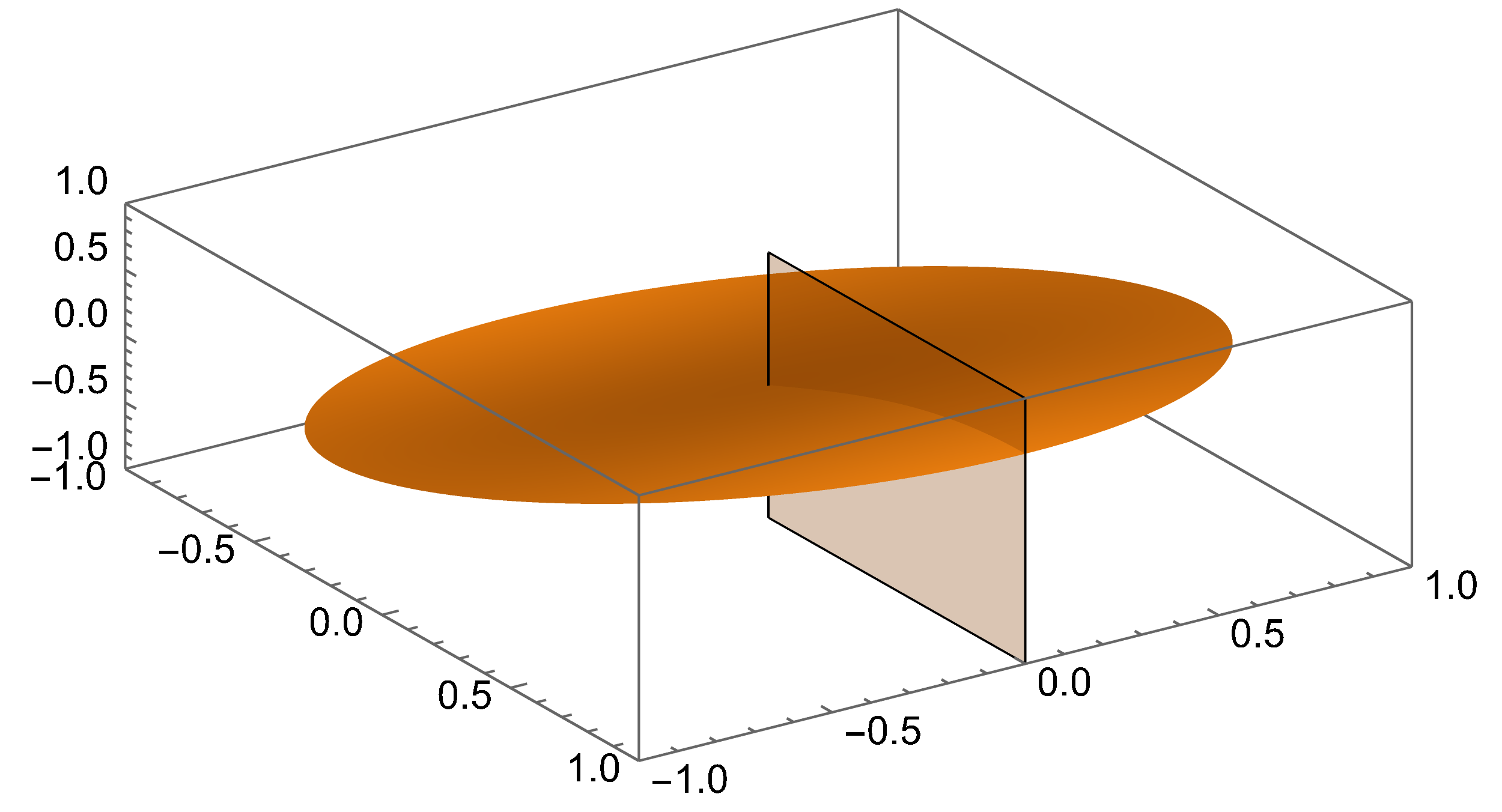}\kern6mm\includegraphics[width=63mm]{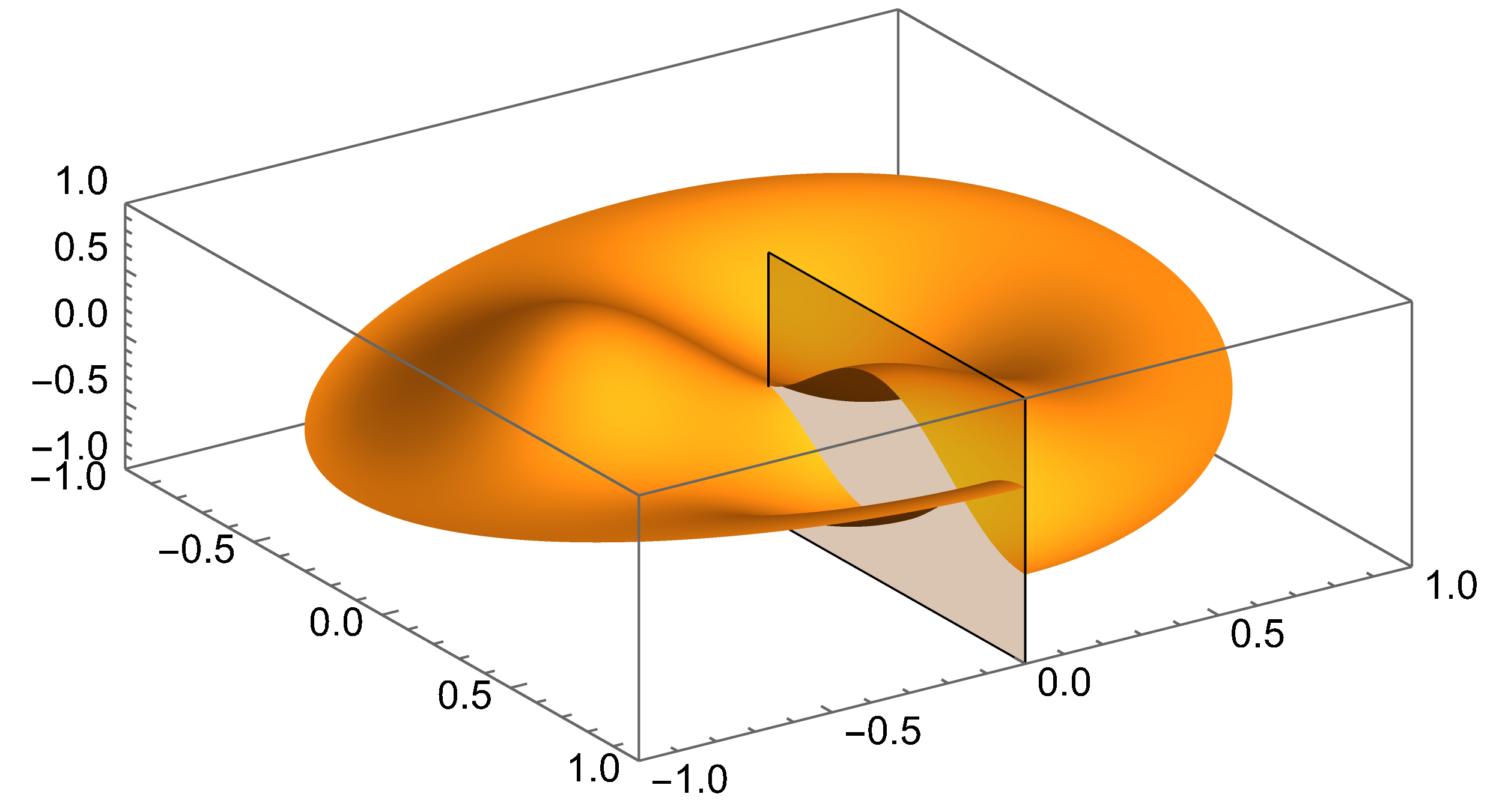}}
 \caption{If $h = 1/10$ in the cylinder with $D = \{ |x| < 1 \} \setminus \{ x_1 \geq 0,
 \ x_2 = 0 \}$ and $d = 1$, then for $\rho \approx 2.750213...$ we have $\nu_2^{(+)}
 (\rho) = \nu_{13}^{(-)} (\rho) \approx 1.729565...$ The graphs of traces on $D$ are
 plotted for the corresponding eigenfunctions \eqref{eigen}: $v_2 (x)$ (left) and
 $v_{13} (x)$ (right).}
  \label{fig:15}
%
\vspace{2.4mm}
%
\centering
 \SetLabels
 \L (0.08*0.14) $x_1$\\
 \L (0.61*0.14) $x_1$\\
 \L (0.345*0.08) $x_2$\\
 \L (0.875*0.08) $x_2$\\
 \L (-0.02*0.58) $v_3$\\
 \L (0.495*0.58) $v_{19}$\\
  \endSetLabels
 \leavevmode\mbox{\kern2.5mm}\AffixLabels{\includegraphics[width=63mm]{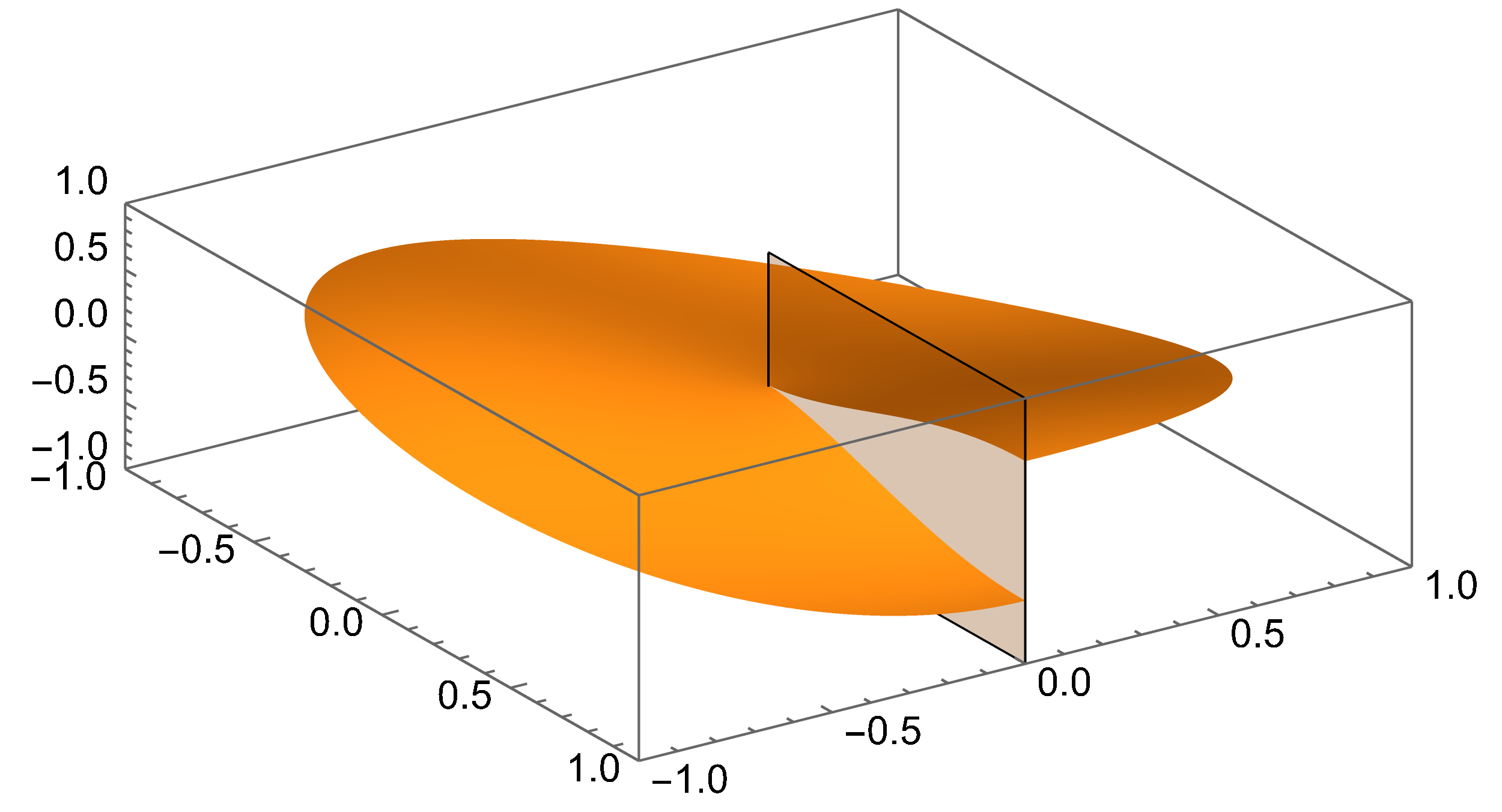}\kern6mm\includegraphics[width=63mm]{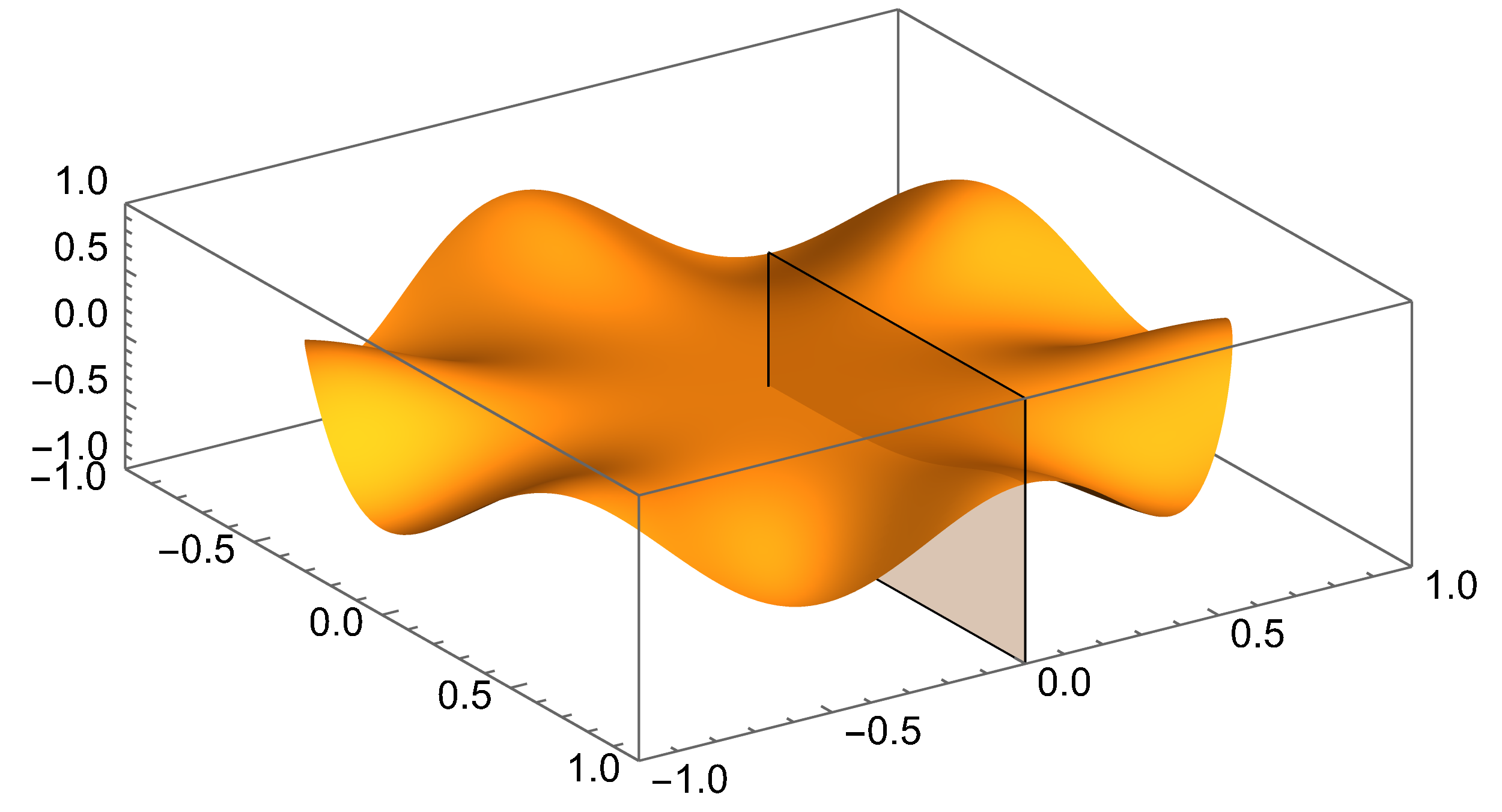}}
 \caption{If $h = 1/10$ in the cylinder with $D = \{ |x| < 1 \} \setminus \{ x_1 \geq 0,
 \ x_2 = 0 \}$ and $d = 1$, then for $\rho \approx  2.678476...$ we have $\nu_3^{(+)}
 (\rho) = \nu_{19}^{(-)} (\rho) \approx 2.413589...$ The graphs of traces on $D$ are
 plotted for the corresponding eigenfunctions \eqref{eigen}: $v_3 (x)$ (left) and $v_{19}
 (x)$ (right).}
  \label{fig:16}
\end{figure}

\subsection{Properties of sloshing eigenfunctions}
\label{sect:3.1}

The same formulae \eqref{eigen}--\eqref{Udef'} (see Subsection~2.3, where the case
of a circular cylinder is considered), but with $\bar{k}_m$ changed to $k_m$,
describe the behaviour of sloshing eigenfunctions in a baffled circular cylinder.
Moreover, the same graphs depict some eigenfunctions which correspond to the same
eigenvalue as in Subsection~2.3, but having different numbers here. For example, the
eigenfunctions
\[ u^{(j)}_{1,+} (x,y) = \bar{k}_1 \, v_1 (x) \, U^{(j)}_{1,+} (y) \ \ \mbox{and} \ \
u^{(j)}_{11,-} (x,y) = \bar{k}_{11} \, v_{11} (x) \, U^{(j)}_{11,-} (y) \, , \ \ j =
1,2 ,
\]
(see Figs.~\ref{fig:4} and \ref{fig:7}, where $U^{(j)}_{1,+}$, $U^{(j)}_{11,-}$ and
$v_1$, $v_{11}$, respectively, are plotted) correspond to $\nu_1^{(+)}
(1.804998...) = \nu_{11}^{(-)} (1.804998...) \approx 1.736930...$ in the case of the
circular cylinder with $D = \{ |x| < 1 \}$, $h = 1/10$ and $d = 1$. However, if $D =
\{ |x| < 1 \} \setminus \{ x_1 \geq 0, \ x_2 = 0 \}$, whereas $h$ and $d$ are the
same, then these eigenfunctions correspond to $\nu_2^{(+)} = \nu_{21}^{(-)} \approx
1.736930...$




Let us consider further examples of the behaviour of the amplitude factor
$U^{(j)}_{m,\pm} (y)$; see the marked intersections of different lines in
Fig.~\ref{fig:10}:
\begin{align*} 
& \mbox{the case} \ \nu_1^{(+)} = \nu_{5}^{(-)} \approx 0.923307... \ \mbox{is
presented in Fig.~\ref{fig:11};} \\ & \mbox{the case} \ \nu_2^{(+)} = \nu_{13}^{(-)}
\approx 1.729565... \ \mbox{is presented in Fig.~\ref{fig:12};} \\ & \mbox{the case}
\ \nu_3^{(+)} = \nu_{19}^{(-)} \approx 2.413589... \ \mbox{is presented in
Fig.~\ref{fig:13}.}
\end{align*}
The most significant distinction of these figures from
Figs.~\ref{fig:4}--\ref{fig:6} is the visible jump in Fig.~\ref{fig:11}:
\[ U^{(1)}_{1,+} (-h) - U^{(2)}_{1,+} (-h) \approx 0.043123... \, ,
\]
whereas the corresponding jumps in Figs.~\ref{fig:4}--\ref{fig:6} are tiny.

On the other hand, there are essential distinctions in the behaviour of
eigenfunctions' traces in horizontal cross-sections of the cylinder with a baffle
and without it; compare Figs.~\ref{fig:14}--\ref{fig:16} and
\ref{fig:7}--\ref{fig:9}, respectively. Namely, even smooth surfaces (they
correspond, for example, to $k_2 = j'_{1,1}$, $k_{19} = j'_{6,1}$ plotted in
Figs.~\ref{fig:15}/\ref{fig:16} (left/right), respectively, and to other values $k_m
= j'_{p,s}$ with integer $p$) are cut along the baffle. If $p$ is half-integer in
$k_m = j'_{p,s}$, then the corresponding $v_m$ has a jump across the baffle; see
Figs.~\ref{fig:14}, \ref{fig:15} (right) and \ref{fig:16} (left).

\section{Conclusions}
\label{sect:4}

We have considered the sloshing problem for a two-layer fluid occupying an open
container of finite depth. The most important results of the present work are
summarized here. First, variational principle is formulated along with its
corollary concerning inequality between the fundamental sloshing eigenvalues for
homogeneous and two-layer fluids occupying the same bounded domain.

The further part of the paper deals with theoretical and numerical analyses of
eigenvalues for containers with vertical walls and horizontal bottoms having
arbitrary cross-sections. The main results for this setting are as follows:

(1) Two families of eigenvalues $\bigl\{ \nu_n^{(+)} \bigr\}_1^\infty$ and $\bigl\{
\nu_n^{(-)} \bigr\}_1^\infty$ are obtained in explicit form in terms of those for
the Neumann Laplacian in a horizontal cross-section of the container. The
eigenvalues $\nu_n^{(+)}$ behave similar to those that describe sloshing in a
homogeneous fluid, whereas the other family $\bigl\{ \nu_n^{(-)} \bigr\}_1^\infty$
includes sufficiently small values provided the ratio of densities is close to
unity.

(2) Various properties of these sequences of eigenvalues are established; in
particular, the dependence on the interface depth $h$ and the ratio of densities
$\rho$ as well as their high frequency asymptotics. Monotonicity of $\nu_n^{(-)}
(\rho)$ and $\nu_n^{(+)} (\rho)$ is proved. Moreover, an interesting property is
observed: if the ratio of densities and the depth $d$ are fixed, then the graph of
$\nu_n^{(\pm)} (h)$ is symmetric about the line $h = d/2$. Finally, the asymptotic
behaviour of the eigenvalue counting function is found.

(3) The behaviour of eigenvalues and the corresponding eigenfunctions is illustrated
numerically for circular cylinders. The graphs of $\nu_m^{(+)} (\rho)$ and
$\nu_m^{(-)} (\rho)$ are presented for initial values of $m$. The multiplicity of
eigenvalues is discussed. Namely, every eigenvalue is either simple or has
multiplicity two in the case of a homogeneous fluid in a circular cylinder. In the
case of a two-layer fluid, the same is true on every curve belonging either to
$\bigl\{\nu^{(+)}_m (\rho) \bigr\}_1^\infty$ or to $\bigl\{ \nu^{(-)}_m (\rho)
\bigr\}_1^\infty$, except for the points, where curves of these two families
intersect each other. Therefore, unlike the case of a homogeneous fluid, there are
eigenvalues of multiplicity three and four. Sloshing eigenfunctions are investigated
for the values of $\rho$, where these multiplicities occur. Another important
property of the two-layer sloshing concerns the lowest eigenvalue, which is much
smaller than that in the case of a homogeneous fluid.

(4) The effect of a radial baffle (it extends throughout the two-layer fluid depth)
on sloshing in a circular container is also demonstrated. The corresponding results
of computations are compared with those when the depth of the interface is the same
in the cylinder without baffle. It is noteworthy that the reduction effect in the
value of the fundamental eigenvalue due to the presence of a baffle manifests itself
more strongly in a two-layer fluid than in a homogeneous one.

{\small

}


\begin{thebibliography}{10}

\bibitem{AS} Abramowitz, M. and Stegun, I. A. {\it Handbook of Mathematical
Functions}. US National Bureau of Standards, Washington DC (1964)

\bibitem{CH} Courant, R. and Hilbert, D. {\it Methods of Mathematical Physics}.
Vol.~\textbf{1}. Interscience, New York (1953)

\bibitem{FT} Faltinsen, O. M. and Timokha, A. N., {\it Sloshing}. Cambridge
University Press, New York (2009)

\bibitem{FK} Fox, D. W. and Kuttler, J. R., ``Sloshing frequencies,'' ZAMP {\bf 34},
668--696 (1983)

\bibitem{I} Ibrahim, R. A. {\it Liquid Sloshing Dynamics}. Cambridge University
Press, New York (2005)

\bibitem{KK} Kopachevsky, N. D. and Krein, S. G., {\it Operator Approach to Linear
Problems of Hydrodynamics}. Birkhäuser, Basel--Boston--Berlin (2001)

\bibitem{KM} Kuznetsov, N. G. and Motygin, O. V., ``Sloshing in vertical cylinders
with circular walls: The effect of radial baffles,'' Phys. Fluids {\bf 33}, 102106
(2021)

\bibitem{M} Molin, B., Remy, F., Audiffren, C., Marcer, R., Ledoux, A., Helland, S.,
and Mottaghi, M., ``Experimental and numerical study of liquid sloshing in a
rectangular tank with three fluids,'' in The 22nd International Offshore and Polar
Engineering Conference, Greece, 2012.

\bibitem{Putin2022}  Putin, N., ``Asymptotics of the Sloshing Eigenvalues for a Two-layer Fluid,'' Mémoire de maîtrise ès sciences en mathématiques,  Université de Montréal, 2022.

\bibitem{Mi} Mi-An Xue, Zhanxue Cao, Xiaoli Yuan, Jinhai Zheng and Pengzhi Lin, ``A
two-dimensional semi-analytic solution on two-layered liquid sloshing in a
rectangular tank with a horizontal elastic baffle,'' Phys. Fluids {\bf 35}, 062116
(2023)


\end{thebibliography}
\end{document}